\DeclareSpecialMathOperator{\bits}{bits}
\DeclareSpecialMathOperator{\size}{size}
\DeclareSpecialMathOperator{\seq}{seq}
\DeclareSpecialMathOperator{\timecert}{timecert}
\DeclareSpecialMathOperator{\commit}{digest}
\DeclareSpecialMathOperator{\certify}{certify}
\DeclareSpecialMathOperator{\sparsecommit}{iterative\_digest}
\DeclareSpecialMathOperator{\sparsecertify}{iterative\_certify}
\DeclareSpecialMathOperator{\verify}{verify}
\DeclareSpecialMathOperator{\gcommit}{digest\_vertex}
\DeclareSpecialMathOperator{\dock}{commit}
\DeclareSpecialMathOperator{\gcertify}{certificate\_vertices}
\DeclareSpecialMathOperator{\certificatepool}{certificate\_pool}
\DeclareSpecialMathOperator{\commitpool}{digest\_pool}
\DeclareSpecialMathOperator{\nextroot}{next\_root}
\DeclareSpecialMathOperator{\nextpower}{next\_power}
\DeclareSpecialMathOperator{\nextp}{next_{2}}
\DeclareSpecialMathOperator{\prevp}{prev_{2}}
\DeclareSpecialMathOperator{\nexto}{next_{3}}
\DeclareSpecialMathOperator{\prevo}{prev_{3}}
\newcommand{\determines}[0]{label-determines}
\begin{document}

\usetikzlibrary{calc}
\usetikzlibrary{fadings}

\definecolor{cCert}{RGB}{255, 201, 143}
\definecolor{cCertText}{RGB}{212, 110, 2}
\definecolor{cCertPathText}{RGB}{127, 3, 252}
\definecolor{cCertPool}{RGB}{201, 222, 255}
\definecolor{cCertPoolText}{RGB}{65, 116, 196}
\definecolor{cSkip}{RGB}{0, 120, 68}
\definecolor{myOrange}{rgb}{1.0, 0.66, 0.07}
\colorlet{myRed}{red!90!black}
\definecolor{myBlue}{rgb}{0.25, 0.0, 1.0}

\definecolor{cProofPath1}{rgb}{0.25, 0.0, 1.0}
\definecolor{cProofPath2}{rgb}{0.4, 0.0, 0.80}

\colorlet{cCertPath}{cProofPath1!25!white}
\colorlet{cCertPathText}{cProofPath1!90!black}
\colorlet{cProofPathText1}{cProofPath1!90!black}
\colorlet{cProofPathText2}{cProofPath2!90!black}

\definecolor{cPoolPath1}{rgb}{0.0, 0.3, 1.0}
\definecolor{cPoolPath2}{rgb}{0.0, 0.6, 0.9}
\definecolor{cPoolPath3}{rgb}{0.0, 0.8, 0.7}

\colorlet{cPoolPath}{cPoolPath1!25!white}
\definecolor{cPoolPathText1}{rgb}{0.0, 0.3, 0.8}
\colorlet{cPoolPathText2}{cPoolPath2!80!black}
\colorlet{cPoolPathText3}{cPoolPath3!80!black}

\begin{tikzfadingfrompicture}[name=faderight]   
    \clip (0,0) rectangle (2,2);   
    \shade[left color=black,right color=white] (0,0) rectangle (1.7,2);
    \shade[left color=white,right color=white] (1.7,0) rectangle (2.1,2);                
\end{tikzfadingfrompicture}

\tikzset{
    myedge/.style={
        arrows=->
    },
    edgePredecessor/.style={
        arrows=->,
    },
    edgeSkip/.style={
        arrows=->,
        color=cSkip,
        very thick
    },
    emphedge/.style={
        arrows=->,
        color=myRed,
        ultra thick
    },
    edgeParent/.style={
        arrows=->,
        color=black!50,
    },
    nodeItem/.style={
        color=black!50,
    },
    certPathV/.style={
        fill=cCertPath,
        rounded corners=5
    },
    certPathE/.style={
        preaction={draw,line width=6,cCertPath,line cap=round,arrows=-}
    },
    certV/.style={
        fill=cCert,
    },
    certPoolV/.style={
        fill=cPoolPath,
        rounded corners=5
    },
    certPoolE/.style={
        preaction={draw,line width=6,cPoolPath,line cap=round,arrows=-}
    },
    toProve/.style={
        font=\boldmath,
    },
    digest/.style={
        font=\boldmath,
    },
    emphnode/.style={
        font=\boldmath,
        fill=myRed,
        rounded corners=5
    },
    codevariablepi/.style={
    },
    codevariablepipp/.style={
    },
    codevariablevi/.style={
    },
    separatingline/.style={
        gray,
        dashed
    },
    stylerepairededge/.style={
        preaction={draw,line width=10,cCert,line cap=round}
    },
    styleinvalidedge/.style={
        preaction={draw,line width=10,cCert,line cap=round}
    },
    stylepinode/.style={
        fill=colorrepairededge
    },
    highlightedge/.style={
        preaction={draw,line width=10,cCertPath,line cap=round}
    },
    highlightnode/.style={
        fill=cCertPath,
        rounded corners=8
    },
}

\date{}

\title{\Large \bf SoK: Authenticated Prefix Relations ---\\
  A Unified Perspective On Relative Time-Stamping and Append-Only Logs}

\author{
{\rm Aljoscha Meyer}\\
Technical University Berlin
}

\maketitle

\begin{abstract}
Secure relative timestamping and secure append-only logs are two historically mostly independent lines of research, which we show to be sides of the same coin --- the authentication of prefix relations. From this more general viewpoint, we derive several complexity criteria not yet considered in previous literature. We define transitive prefix authentication graphs, a graph class that captures all hash-based timestamping and log designs we know of. We survey existing schemes by expressing them as transitive prefix authentication graphs, which yields more compact definitions and more complete evaluations than in the existing literature.
\end{abstract}

\section{Introduction}


Consider the problem of mapping any finite sequence to a small digest, such that for any pair of a sequence and one of its prefixes, their digests together with a small certificate unforgeably certify that one is a prefix of the other. In other words, consider the problem of finding an authenticated data structure for the prefix relation on strings.

While we are not aware of any prior work that explicitly takes this abstract viewpoint of the problem, there are several publications that tackle it through the lens of a specific use-case: secure logging~\cite{schneier1999secure}\cite{crosby2009efficient}\cite{pulls2013distributed}, accountable shared storage~\cite{li2004secure}\cite{yumerefendi2007strong}, certificate transparency~\cite{laurie2014certificate}\cite{laurie2014certificateb}\cite{rfc9162}, or data replication~\cite{ogden2017dat}\cite{tarr2019secure}.

Some fifteen years before most of these publications, investigation into secure relative timestamping -- given two events, give an unforgeable certificate that one happened before the other -- has yielded \defined{linking schemes}~\cite{buldas1998time} as a class of efficient timestamping solutions, with several specific schemes being proposed~\cite{haber1990time}\cite{buldas1998new}\cite{buldas2000optimally}. While solutions to relative timestamping need not necessarily yield solutions to prefix authentication in general, every \textit{linking scheme} does provide prefix authentication.

So there are quite a few disjoint publications working on essentially the same problem, but with sparse interreferencing and the independent invention of various metaphorically wheel-shaped devices. All these techniques are based on the observation that if some object contains a secure hash of another object, the prior must have been created after the latter. Linking schemes are a well-defined class of such solutions, other approaches only have ad-hoc descriptions and proofs of correctness. Different publications also considered different efficiency criteria, making it difficult to objectively compare several approaches.

We provide systemization on several levels. First, we give a precise definition of prefix authentication, and systematically derive a set of efficiency criteria from this definition (\cref{prefix_authentication_schemes}). Second, we define a class of digraphs which generalizes the linking schemes, can be used for prefix authentication, and includes all hash-based prefix authentication schemes that we are aware of (\cref{transitive_prefix_authentication_schemes}). And third, we survey those prior schemes, expressing them in our formalism and evaluating them by our complexity criteria, yielding the most complete comparison of such schemes so far (\cref{existing_schemes}), and revealing some flaws in the current state of the art.

We round out our presentation with an overview of related work (\cref{related_work}), some mathematical preliminaries (\cref{preliminaries}), and a conclusion (\cref{conclusion}).

\section{Related Work}\label{related_work}

Authenticated data structures~\cite{tamassia2003authenticated} are data structures which allow to supplement query results with a small \defined{certificate} of the result's validity; a \defined{verifier} with access to only a small \defined{digest} of the current state of the data structure can check whether the query was answered truthfully based on the certificate.

Secure (relative) timestamping asks for authenticated data structures for a happened-before relation; given two events, it should be possible to prove which occurred first (as opposed to both happening concurrently). Haber and Stornetta's foundational work~\cite{haber1990time} arranges events in a linked list, using secure hashes as references. The path from an event to a prior one certifies their happened-before relation.

A straightforward optimization is batching multiple events into a single \defined{round} by storing all events within a round in a Merkle tree, and linearly linking the roots of each round rather than individual events~\cite{bayer1993improving}. This is fully analogous to the blocks of a blockchain. Certificate sizes do not decrease asymptotically however.

Buldas, Laud, Lipmaa, and Villemson provide the first asymptotic improvement by adding additional hashes such that there exist paths of logarithmic length between any pair of events~\cite{buldas1998time}. After Buldas and Laud finding the solution with the shortest certificates in this class~\cite{buldas1998new}, Buldas, Lipmaa, and Schoenmakers provide \defined{threaded authentication trees}~\cite{buldas2000optimally}, an even more efficient construction in the class of acyclic graphs where the vertex for every event is reachable from all later events. This is the class we refer to as \defined{linking schemes}, and upon which our generalization builds.

Their presentation and optimality proofs rely on a notion of time-stamping rounds, the maximum number of events in a single round features in their complexity analyses. We take on a more general setting, the notion of rounds corresponds to prefix authentication for sequences of bound length. Hence, their lower bounds do not apply to our setting.

Laurie, Langley, Kasper, Messeri, and Stradling introduce \defined{certificate transparency} (CT)~\cite{laurie2014certificate}~\cite{laurie2014certificateb}~\cite{rfc9162}, a proposed Internet standard for publicly logging information about the activities of \defined{certificate authorities} (CAs). Rather than detailing the extensive background of \defined{public-key infrastructure} into which CT embeds itself, we refer to~\cite{leibowitz2021ctng} Section 2. We shall further abstract over optimization details of CT such as \defined{signed certificate timestamps} (SCTs) or \defined{maximum merge delays}. What remains is a so-called\footnote{These logs are, quite simply, \textit{not} append-\textit{only} logs. A log maintainer cannot be prevented from creating several branches, this can merely be detected after the fact. Upon detection, a log consumer must somehow deal with the misbehavior, for example, by invalidating the log. But at that point, the log has turned into an ``append-only-until-a-full-deletion'' log, which is strictly more powerful than an append-only log. No matter how forks are handled, the data structure is too powerful. Calling a data structure an append-\textit{only} log even though it is \textit{not} is careless at best and misleading at worst, especially in a security context. Hence, we exclusively talk about prefix authentication from this point on.} append-only log with the dual-purpose of certifying both a prefix relation and set membership. The CT data structure achieves the same certificate sizes as the threaded authentication trees, but is not a linking scheme, which prompts our generalization.

Various publications propose alternatives to the original CT design, reasons include adding support for certificate revocation, strengthening the trust model, or mitigating privacy concerns~\cite{laurie2012revocation}\cite{ryan2013enhanced}\cite{kim2013accountable}\cite{melara2015coniks}\cite{basin2014arpki}\cite{yu2016dtki}\cite{leibowitz2021ctng}. Other publications extend the concept of logging objects for accountability reasons to application binaries~\cite{fahl2014hey}\cite{nikitin2017chainiac}\cite{al2018contour} or arbitrary data~\cite{eijdenberg2015verifiable}\cite{pulls2015balloon}. None of them improve the underlying prefix authentication scheme.

Several publications generalize (and formalize) the requirements behind CT as the combination of authenticating an append-only property and supporting authenticated membership queries: the \defined{logging scheme}~\cite{dowling2016secure}, the \defined{dynamic list commitment}~\cite{chase2016transparency}, or the \defined{append-only authenticated dictionary}~\cite{tomescu2019transparency} fall in this category. We are effectively ``factoring out'' the prefix authentication; the other ``factor'' being authenticated set data structures, which have already received extensive treatment~\cite{naor2000certificate}\cite{goodrich2000efficient}\cite{buldas2002eliminating}\cite{papamanthou2008authenticated} on their own. Usually so, our definition requires neither a specific \textit{append} operation nor ``append-only proofs''; the general notion of prefixes suffices.

\defined{Secure scuttlebutt}~\cite{tarr2019secure} and \defined{hypercore}\cite{ogden2017dat} rely on authenticated prefix relations for efficient event replication. Secure scuttlebutt uses a linked list, hypercore has its own solution which is almost identical to the log of CT.

Some approaches to tamper-evident logging also rely on linked lists~\cite{schneier1999secure}\cite{pulls2013distributed}. In this context, Crosby and Wallach~\cite{crosby2009efficient} designed the first non-linking scheme approach to prefix authentication that we know of. Their scheme precedes the CT design by five years, but it is strictly more complex and inefficient.
 
Work on secure networked memory has also made use of linked lists for prefix authentication~\cite{li2004secure}\cite{yumerefendi2007strong}. They speak of \defined{fork-consistency}: because forks (creation of strings neither of which is a prefix of the other) cannot be authenticated, a malicious author must consistently feed updates from the same forks to the same data consumers to avoid detection.

Since a malicious data source can easily present such different views to several consumers, data consumers should exchange information amongst each other to protect against such \defined{split world} attacks. Epidemic protocols~\cite{demers1987epidemic} can be used to this end~\cite{chuat2015efficient}\cite{ietf-trans-gossip-05}. A complementary approach is to enforce (efficient) \textit{cosigning}, where a data source must present its updates to a large number of other participants for approval~\cite{syta2016keeping}.

Various authors argue that such reactive detection mechanism are insufficient, and propose a proactive approach based on enforcing global consensus by moving events onto a blockchain~\cite{tomescu2017catena}\cite{nikitin2017chainiac}\cite{al2018contour}\cite{madala2018certificate}\cite{kubilay2019certledger}\cite{wang2020blockchain}. Why an adversary with enough power to perpetually prevent communication between nodes that received incomparable views would be unable keep those nodes in distinct bubbles that each produce separate extensions of the blockchain is beyond us, but who are we to argue against the magic powers of the blockchain?

While we restrict our attention to prefix authentication schemes that exclusively rely on secure hashing, there also exist approaches based on cryptographic accumulators~\cite{singh2017certificate}.

\section{Preliminaries}\label{preliminaries}

We write $\Nz$ for the natural numbers, and $\N$ for the natural numbers without $0$. Let $N \subseteq \Nz$, and $n \in \Nz$, then $\setleq{N}{n} \defeq \set{m \in N \st m \leq n}$ and $\setgeq{N}{n} \defeq \set{m \in N \st m \geq n}$. We denote by $\bits(n)$ the unique set of natural numbers such that $\sum_{k \in \bits(n)} 2^k = n$.

We assume basic understanding of cryptographic hash functions~\cite{menezes2018handbook}, and a basic background in graph theory~\cite{west2001introduction}. In the following, $u, v$ always denote vertices, $U, X$ always denote sets of vertices, $G$ always denotes a graph on vertices $V$ with edges $E$. As we talk about directed graphs exclusively, all graph terminology (\defined{graph}, \defined{path}, etc.) refers to directed concepts. We only consider graphs without loops. Whenever we apply a concept that is defined on sets of vertices to an individual vertex, we mean the concept applied to the singleton set containing that vertex.

A \defined{directed acyclic graph} (DAG) is a graph without cycles. The (open) \defined{out-neighborhood} of $U$ in a graph $G$ is $\outs{G}{U} \defeq \set{v \in V \setminus U \st \text{there is $u \in U$ such that $(u, v) \in E$}}$. A \defined{sink} is a vertex with an empty out-neighborhood, $\sinks{G}$ denotes the sinks of $G$. $\reach_G(v)$ denotes the set of all vertices $u \in \V(G)$ such that there is a path from $v$ to $u$ in $G$, and $\reach_G(U) \defeq \bigcup_{u \in U} \reach_G(u)$.

We use \defined{sequence} and \defined{string} synonymously and assume both to always be finite. ``$\concat$'' denotes concatenation, $\preceq$ denotes the prefix relation, for $s \preceq t$, we call $s$ a \defined{prefix} of $t$ and $t$ an extension of $s$. $\epsilon$ denotes the empty sequences. For a sequence $s$, $s_i$ denotes the $i$-th sequence item, with indexing startig at $1$.

\section{Prefix Authentication Schemes}\label{prefix_authentication_schemes}

We can now state what it means to authenticate a prefix relation.

\begin{definition}[Prefix Authentication Scheme]
  A \defined{prefix authentication scheme} (\defined{PAS}) for sequences over some universe $U$ is a triplet of algorithms $\commit$, $\certify$, and $\verify$:

  \begin{itemize}
    \item $\fun{\commit}{\kleene{U}}{\kleene{\set{0, 1}}}$ maps any sequence $s$ to some bitstring $\commit(s)$, the \defined{digest} of $s$.
    \item $\partialfun{\certify}{\kleene{U} \times \kleene{U}}{\kleene{\set{0, 1}}}$ maps any pair of sequences $s \preceq t$ to some bitstring $\certify(s, t)$, the \defined{prefix certificate} of $s$ and $t$.
    \item $\verify$ takes bitstrings $d_s$, $d_s$ and $p$ and two natural numbers $len_s$ and $len_t$, and returns $\true$ if there exist sequences $s \preceq t$ of length $len_s$ and $len_t$ respectively such that $d_s = \commit(s)$, $d_s = \commit(t)$, and $p = \certify(s, t)$. Furthermore, it must be computationally infeasible to find inputs for $\verify$ that result in $\true$ otherwise.
  \end{itemize}
\end{definition}

From this definition, we can systematically derive the efficiency criteria by which to judge a PAS. Straightforward criteria are the time and space complexity of the three functions, as well as the sizes of digests and prefix certificates. Previous work often ignores some of these, especially when they are obvious in the context of that work. In comparing several different prefix authentication approaches from fully independent work, we consider it important to make all these basic criteria explicit.

A less obvious pair of criteria derives from the question of which portions of their input the algorithms actually utilize. Both $\commit$ and $\certify$ receive full sequences as input. In practice however, we are interested in schemes that compute them from only a sublinear amount of information about the sequence, say, in a peer-to-peer system where storing full sequences would impose prohibitive storage overhead.

The first such criterium asks which information is required to indefinitely keep appending items to a sequence and compute the digests of all these extensions. We require a function $\fun{\sparsecommit}{I \times U}{\kleene{\set{0, 1}} \times I}$ that maps information (of some type $I$) about a sequence $s$ and a new item $u$ to the digest of $s \concat u$ and the information (again of type $I$) about $s \concat u$. Repeatedly calling this function allows computing the digest for any sequence. Besides the computational complexity of $\sparsecommit$, we are interested in the size of the information for sequences of length $n$.

For prefix certificate computation, we wish to map any sequence to some (small) piece of information such that the prefix certificate for any two sequences can be computed from their two pieces of information. Again we are interested in the cost of these computations, and the size of the information depending on the length of the sequence. This generalizes the notion of a \textit{time certificate}~\cite{buldas2000optimally}, which is the primary focus of the timestamping literature, whereas this criterium is rarely analyzed in the discussion of logs. We call the piece of information about each sequence its \defined{positional certificate}.

\section{A Class of Solutions}\label{transitive_prefix_authentication_schemes}

We now develop a family of prefix authentication schemes that generalizes over all hash-based secure timestamping and logging schemes that we are aware of.

When some object contains a secure hash of another object, any change to the latter would invalidate the prior. A classic data structure to leverage this property is the \defined{Merkle tree}~\cite{merkle1989certified}, a rooted tree which labels leaves with a secure hash of their contained value, and which labels inner vertices with a secure hash of the concatenation of the child labels.

We generalize the idea behind Merkle trees to arbitrary DAGs. We label sinks via some function $\fun{\lfun}{\sinks(V)}{\set{0, 1}^{k}}$. For all non-sink vertices, we aggregate the labels of their out-neighbors via some hash function $\h$. In order to deterministically apply $\h$ to sets (of out-neighbors of a vertex), we assume there is some arbitrary but fixed total order $\leq$ on $V$, and define how to convert any vertex set $U$ into a unique sequence $\seq(U)$ via $\seq(\emptyset) \defeq \epsilon$, $\seq(U) \defeq \min_{\leq}(U) \concat (U \setminus \min_{\leq}(U))$. We then define

\begin{align*}
  \lbl_{\lfun, \h}(v) &\defeq \begin{cases}
    \lfun(v) &\mbox{if } v \in \sinks(V),\\
    \h(\seq(\outs{G}{v})) & \mbox{otherwise.}\\
\end{cases}
\end{align*}

For binary out-trees, this yields exactly the Merkle tree construction. We call a pair $(G = (V, E), \lbl_{\lfun, \h})$ a \defined{Merkle DAG}.
If every maximal path from $v$ intersects $U$, then $\lbl_{\lfun, \h}(v)$ can be computed from the labels of the vertices in $U$; we say that $U$ \defined{\determines{}} $v$. If $U$ \determines{} every $x \in X$, we say that $U$ \determines{} $X$. Given $U$ and $v$ such that $U$ \determines{} $v$, and some $U$-labeling $\fun{p}{U}{\set{0, 1}^{k}}$, we denote the expected label of $v$ that can be computed from $U$ and $p$ by $\lbl_{\h}^{p}(v)$. Observe that functions $p$ are practically unique for any fixed $\lbl_{\h}^{p}(v)$:

\begin{proposition}\label{uniquely_determined}
  Let $(G = (V, E), \lbl_{\lfun, \h})$ be a Merkle DAG, let $v \in V$, and $U \subseteq V$ such that $U$ \determines{} $v$. Then it is computationally infeasible to find a labeling $\fun{p}{U}{\set{0, 1}^{k}}$ such that $\lbl_{\h}^{p}(v) = \lbl_{\lfun, \h}(v)$ and $p \neq \restrict{\lbl_{\lfun, \h}}{\domain(p)}$.
\end{proposition}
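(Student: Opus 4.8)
The plan is to reduce the existence of such a labeling to a collision of $\h$, mirroring the classical argument for Merkle trees but carried out over the DAG. Write $q \defeq \restrict{\lbl_{\lfun, \h}}{U}$ for the honest labeling of $U$. Because $U$ \determines{} $v$, evaluating the label computation with the honest labels reproduces the true label, i.e.\ $\lbl_{\h}^{q}(v) = \lbl_{\lfun, \h}(v)$. Hence any $p$ witnessing the statement satisfies $\lbl_{\h}^{p}(v) = \lbl_{\h}^{q}(v)$ while $p \neq q$, so it suffices to show that from two distinct $U$-labelings agreeing on the label of $v$ one can efficiently extract a collision of $\h$.

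First I would make the recursion underlying $\lbl_{\h}^{p}(v)$ explicit: it assigns to $v$ the value $p(v)$ if $v \in U$, and otherwise $\h(\seq(\cdot))$ applied to the labels of $v$'s out-neighbors, descending recursively and stopping at $U$. Let $R$ be the set of vertices actually visited by this recursion from $v$; since $U$ \determines{} $v$ and $G$ is acyclic, $R$ is finite, every descent terminates in $U$, and every $w \in R$ is reachable from $v$ along a path whose vertices strictly before $w$ lie in $R \setminus U$. Note that $\lbl_{\h}^{p}(v)$ depends on $p$ only through its restriction to $R \cap U$; if $p$ and $q$ differed only outside $R$ the equality of labels would be automatic and no collision arises. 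I would therefore treat the meaningful case, where $p$ and $q$ disagree on some vertex of $R$ (and remark that the other case is not a genuine forgery, or equivalently assume $U$ to be a minimal determining set, so that $U \subseteq R$).

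The core step is a descent to the first disagreement. Pick $d \in R$ with $\lbl_{\h}^{p}(d) \neq \lbl_{\h}^{q}(d)$ and a path $v = w_0, w_1, \dots, w_m = d$ whose vertices $w_0, \dots, w_{m-1}$ avoid $U$. Since $\lbl_{\h}^{p}(w_0) = \lbl_{\h}^{q}(w_0)$ but $\lbl_{\h}^{p}(w_m) \neq \lbl_{\h}^{q}(w_m)$, choose the largest index $i$ with $\lbl_{\h}^{p}(w_i) = \lbl_{\h}^{q}(w_i)$; then $i < m$, the out-neighbor $w_{i+1}$ satisfies $\lbl_{\h}^{p}(w_{i+1}) \neq \lbl_{\h}^{q}(w_{i+1})$, and $w_i \notin U$ is not a sink. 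Consequently $\lbl_{\h}^{p}(w_i)$ and $\lbl_{\h}^{q}(w_i)$ are both obtained as $\h(\seq(\cdot))$ applied to the out-neighbor labels of $w_i$; because $\seq$ orders $\outs{G}{w_i}$ by the fixed vertex order independently of the labels, $w_{i+1}$ occupies the same position in both sequences with differing label, so the two hash preimages are distinct yet map to the equal value $\lbl_{\h}^{p}(w_i) = \lbl_{\h}^{q}(w_i)$. This is an explicit collision of $\h$.

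Finally I would package this as a reduction: an adversary producing such a $p$ yields, via the above deterministic descent, a collision of $\h$, contradicting its collision resistance; hence finding $p$ is computationally infeasible. I expect the main obstacle to be the bookkeeping around the descent --- specifically guaranteeing that the first-disagreement vertex $w_i$ is genuinely an internal (non-sink, non-$U$) vertex whose label is computed by a hash, and cleanly dispatching the degenerate case in which $p$ and $q$ differ only on vertices that the computation never inspects.
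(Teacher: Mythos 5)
Your proof is correct and takes essentially the same route as the paper's: both locate the last vertex along the label computation at which the forged and honest labelings still agree but some out-neighbor's labels differ, and read off two distinct preimages of $\h$ with equal hash value, contradicting collision resistance. You merely carry out the descent more explicitly than the paper's three-sentence argument and, usefully, flag the degenerate case (disagreement confined to vertices of $U$ the computation never inspects) that the paper's proof silently passes over.
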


\begin{proof}
  Assume it was feasible to find $p \neq \restrict{\lbl_{\lfun, \h}}{\domain(p)}$ with $\lbl_{\h}^{p}(v) = \lbl_{\lfun, \h}(v)$. Then there must have existed a vertex $w$ with $\lbl_{\h}^{p}(w) = \lbl_{\lfun, \h}(w)$ having an out-neighbor $x$ with $\lbl_{\h}^{p}(x) \neq \lbl_{\lfun, \h}(x)$. Hence two distinct inputs to $\h$ yielded the same hash, contradicting the collision resistance of $\h$.
\end{proof}

Our prefix authentication schemes will use Merkle DAGs whose sinks we each label with a secure hash of a sequence item. We say a vertex set $U$ is a \defined{commitment} to a vertex set $X$ if $X \subseteq \reach(U)$. Changing the label of any vertex in $X$ changes the label of at least one vertex in $U$. The digests of our schemes will be the labels of singleton commitments to vertices labeled by hashes of sequence items.

We say a vertex set $U$ is a \defined{tight commitment} to a vertex set $X$ if $U$ is a commitment to $X$ and $X$ \determines{} $U$.

Our prefix certificates generalize the set membership proofs of classic Merkle trees. Merkle trees offer compact set membership proofs by reconstructing the label of the root vertex from the labels of the out-neighborhood of a path to a leaf. The out-neighborhood of a union of such paths certifies membership of several leaves at once. We can generalize this to arbitrary Merkle DAGs (see \cref{example_subgraph_proof} for an example):

\begin{definition}[Subgraph Proof]
  Let $(G = (V, E), \lbl_{\lfun, \h})$ be a Merkle DAG, let $U \subseteq V$, and let $v \in V$ such that $U \subseteq \reach_G(v)$. Let $P$ be a family of paths starting in $v$ such that $U \subseteq \outsclosed{G}{P}$, and let $\fun{p}{\outs{G}{P}}{\set{0, 1}^{k}}$, where $\set{0, 1}^{k}$ is the codomain of $\h$.
  
  We then call $(\lbl_{\lfun, \h}(v), p)$ a \defined{potential subgraph proof} of $U$ for $v$.

  Observe that $\outs{G}{P}$ \determines{} $v$. We say $(\lbl_{\h}(v), p)$ is a \defined{(verified) subgraph proof} if $\lbl_{\h}^{p}(v) = \lbl_{\lfun, \h}(v)$, and a \defined{refuted subgraph proof} otherwise.
\end{definition}

\begin{proposition}\label{subgraph_proof}
  Let $(G = (V, E), \lbl_{\lfun, \h})$ be a Merkle DAG, and let $v \in V$. Then, by \cref{uniquely_determined}, it is computationally infeasible to find $U$ and $p$ such that $(\lbl_{\lfun, \h}(v), p)$ is a verified subgraph proof of $U$ for $v$ with $p \neq \restrict{\lbl_{\lfun, \h}}{\domain(p)}$.
\end{proposition}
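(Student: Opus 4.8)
The plan is to reduce the statement directly to \cref{uniquely_determined}, using the open out-neighborhood $\outs{G}{P}$ of the chosen path family as the determining set in that proposition. First I would fix an arbitrary triple $(U, P, p)$ that a prospective forger might output and abbreviate $W \defeq \outs{G}{P} = \domain(p)$. The definition of a subgraph proof already records that $W$ \determines{} $v$, so the hypotheses of \cref{uniquely_determined} are satisfied for the vertex $v$ together with the determining set $W$.

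Next I would observe that the two defining conditions of the object we are trying to forge translate verbatim into the situation that \cref{uniquely_determined} declares infeasible. Being a \emph{verified} subgraph proof means exactly $\lbl_{\h}^{p}(v) = \lbl_{\lfun, \h}(v)$, and the extra requirement $p \neq \restrict{\lbl_{\lfun, \h}}{\domain(p)}$ is precisely the condition that $p$ differs from the genuine labeling on its domain $W$. Hence any witness $(U, P, p)$ for the proposition is, in particular, a labeling $p$ of the determining set $W$ of exactly the kind ruled out by \cref{uniquely_determined}. Note that $U$ enters only through the requirement $U \subseteq \outsclosed{G}{P}$ that the chosen path family cover it; the verification equation, and hence the entire content of the claim, depends solely on $p$ and $\domain(p) = \outs{G}{P}$.

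The one point that needs care is the order of quantifiers: \cref{uniquely_determined} fixes the determining set in advance, whereas here the forger may also choose $P$ and thus range over many possible sets $W$. I would dispose of this by unwinding the reduction one step further, turning a verified-but-fake subgraph proof into an explicit hash collision rather than merely into a violation of \cref{uniquely_determined}. As in the proof of that proposition, a labeling $p$ with $\lbl_{\h}^{p}(v) = \lbl_{\lfun, \h}(v)$ but $p \neq \restrict{\lbl_{\lfun, \h}}{\domain(p)}$ forces some vertex $w$ whose computed label matches the genuine one while an out-neighbor's does not, exhibiting two distinct $\seq$-encodings that collide under $\h$. Since collision resistance holds regardless of how the collision was found, this removes any dependence on having fixed $W$ ahead of time, and the claimed infeasibility follows uniformly over all choices of $U$ and $P$. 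I expect this quantifier subtlety, rather than the otherwise routine matching of definitions, to be the only real obstacle.
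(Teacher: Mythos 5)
Your proposal is correct and takes essentially the same approach as the paper, whose entire justification is the invocation of \cref{uniquely_determined} with the determining set $\outs{G}{P} = \domain(p)$ (guaranteed to \determines{} $v$ by the definition of a subgraph proof). Your additional step resolving the quantifier order --- unwinding a fake verified proof into an explicit hash collision so the reduction is uniform over the forger's choice of $U$ and $P$ --- is precisely the argument inside the paper's proof of \cref{uniquely_determined}, so it is a sound and slightly more careful elaboration of the same route rather than a different one.
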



\begin{corollary}\label{cor_subgraph_proof}
  Let $(G = (V, E), \lbl_{\lfun, \h})$ be a Merkle DAG, let $v \in V$, let $U \subseteq V$, and let $(\lbl_{\lfun, \h}(v), p)$ be a subgraph proof of $U$ for $v$. If $\h$ is secure, then $\outsclosed{G}{\domain(p)}$ is a subgraph of $G$. In particular, $\induced{G}{U}$ is a subgraph of $G$, and $U \subseteq \reach_G(v)$.
\end{corollary}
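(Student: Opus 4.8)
The plan is to bootstrap the structural claim from the label-faithfulness already provided by \cref{subgraph_proof}. That proposition tells us that, unless one can produce a collision for $\h$, a verified subgraph proof must satisfy $p = \restrict{\lbl_{\lfun, \h}}{\domain(p)}$: the frontier labels supplied by the prover agree with the genuine labels of the corresponding vertices. What remains is to promote this agreement on the frontier $\domain(p)$ to agreement on the whole structure that the verifier traverses while recomputing $\lbl_{\h}^{p}(v)$, and then to read off the two ``in particular'' consequences.

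First I would spell out the structure recovered during that recomputation. Starting from $v$, the verifier visits each vertex $w$ it reaches along the paths $P$, reads the claimed labels of the out-neighbors of $w$, orders them with $\seq$, and hashes the result to obtain the claimed label of $w$. Collecting, over all visited $w$, the out-edges consulted in this process yields the subgraph written $\outsclosed{G}{\domain(p)}$ in the statement; the goal is to show that each of its edges genuinely occurs in $G$.

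I would then argue by induction along $P$, from $v$ toward $\domain(p)$. The base case is $v$, whose claimed label equals the digest $\lbl_{\lfun, \h}(v)$ by hypothesis. For the step, assume the claimed label of a vertex $w$ equals its genuine label $\h(\seq(\outs{G}{w}))$. The verifier reproduced that same label as $\h$ applied to the $\seq$-ordered sequence of the claimed out-neighbor labels of $w$; since these two preimages of $\h$ yield the same hash, its security forces them to be equal as sequences, exactly as in the collision argument of \cref{uniquely_determined}. Hence the claimed out-neighbors of $w$ match $\seq(\outs{G}{w})$ label for label and in order, so the out-edges used at $w$ are precisely the out-edges of $w$ in $G$, and the vertices they reach carry their genuine labels, discharging the induction. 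Therefore every edge of the candidate is an edge of $G$, i.e.\ $\outsclosed{G}{\domain(p)}$ is a subgraph of $G$.

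The two consequences follow immediately: since $U \subseteq \outsclosed{G}{P}$, every vertex of $U$ lies on one of these now-verified paths from $v$, giving $U \subseteq \reach_G(v)$, and the edges among $U$ recovered by the proof are genuine, so $\induced{G}{U}$ is a subgraph of $G$. I expect the delicate point to be the inductive step's vertex identification: collision resistance yields equality of the $\seq$-ordered \emph{label} sequences, and one must argue that this pins down the same out-neighbors in $G$, taking a little care in the harmless case where two distinct vertices of $G$ happen to share a label, which still forces a genuine out-edge to be recovered. Everything else is bookkeeping layered on top of \cref{subgraph_proof}.
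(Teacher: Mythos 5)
Your proof is correct, but note that the paper offers no proof of this corollary at all --- it is treated as immediate, and comparing your route against that implicit one is instructive. The paper's intended argument is: by \cref{subgraph_proof}, a verified subgraph proof forces $p = \restrict{\lbl_{\lfun, \h}}{\domain(p)}$ (or else a collision for $\h$ has been exhibited), and every structural assertion in the corollary then comes for free from the \emph{definition} of a subgraph proof --- the family of paths $P$ starting in $v$ with $U \subseteq \outsclosed{G}{P}$ and $\domain(p) = \outs{G}{P}$ exists by definition, and $G$ is a fixed, public graph, so $\outsclosed{G}{\domain(p)}$ being a subgraph of $G$, $\induced{G}{U}$ being a subgraph, and $U \subseteq \reach_G(v)$ are purely graph-theoretic consequences requiring no further cryptography. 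Your top-down induction from $v$ toward the frontier is a sound unrolling of the contrapositive collision argument already contained in \cref{uniquely_determined}, so nothing in your argument fails; but it re-proves that proposition instead of citing \cref{subgraph_proof} once, and it frames the edge structure as something to be cryptographically certified (``each edge genuinely occurs in $G$''), which in this formalism is vacuous: the verifier recomputes labels over the known out-neighborhoods of the fixed $G$, and only the labeling $p$ is adversarial. Consequently the ``delicate point'' you flag --- two distinct vertices of $G$ sharing a label --- is a non-issue here, since vertex identity is supplied by the fixed graph and the fixed ordering $\leq$, not recovered from label values. What your longer route buys is a self-contained argument that would survive a stronger setting in which the verifier does not know $G$ and must reconstruct structure from the proof itself; what the paper's route buys is brevity, with all cryptographic content quarantined in \cref{uniquely_determined} and the corollary reduced to definitional bookkeeping.
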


\begin{figure}
  \centering
  \includegraphics{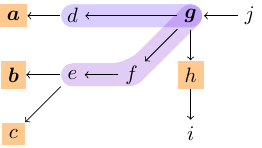}
  \caption{An example subgraph proof of $\set{a, b}$ for $g$.\\$P \defeq \set{d, e, f, g}$ consists of the vertices of \textcolor{cProofPathText1}{two} \textcolor{cProofPathText2}{paths} starting in $g$ whose out-neighborhoods together include $\set{a, b}$. The \textcolor{cCertText}{out-neighborhood of $P$ and its labels yield $p$}, and it (necessarily) \determines{} $g$: the label of $d$ can be computed from the label of $a$; the labels of $b$ and $c$ suffice to compute the label of $e$, which in turn determines the label of $f$; together with the label of $h$, we can compute the label of $g$. If this computed label matches the label given with the subgraph proof, we can be certain (up to hash collisions) that $a$ and $b$ relate to $g$ in the graph as expected and indeed have the labels claimed by $p$.}
  \label{example_subgraph_proof}
\end{figure}

\subsection{Linking Schemes}\label{linking_schemes}

We now have the terminology to define the \defined{linking schemes}, a class of prefix authentication schemes based on Merkle DAGs. We use secure hashes of sequence items to label the sinks of some Merkle DAG in which the set of the sinks that correspond to any prefix of the sequence has a common ancestor vertex; the labels of the common ancestors serve as digests, and subgraph proofs between the digest vertices serve as prefix certificates.

First, we formalize the notion of mapping sequence items to sinks:

\begin{definition}[Sequence Graph]
  Let $G$ be an acyclic graph with $\sinks(G) \supseteq \N$, and let $s$ be a sequence of length $len_s$.
  
  The \defined{sequence graph} of $s$ and $G$ is the Merkle DAG $(G, \lbl_{\lfun_s, \h})$ with

  \begin{align*}
    \lfun_s(s_v) &\defeq \begin{cases}
      \h(v) &\mbox{if } v \in \N^{< len_s},\\
      \h(\epsilon) & \mbox{otherwise.}\\
  \end{cases}
  \end{align*}
\end{definition}

Next, we can describe the class of graphs that allows for prefix authentication:

\begin{definition}[Linking Scheme Graph]
  A graph $G = (V, E)$ is a \defined{linking scheme graph} if $G$ is acyclic, $\sinks(G) \supseteq \N$, and there exist functions $\fun{\gcommit}{\N}{V}$, and $\fun{\gcertify}{\N \times \N}{\powerset{V}}$ such that for all $len_s, len_t \in \N$ with $len_s < len_t$:

  \begin{itemize}
    \item $\gcommit(len_s)$ is a tight commitment to $\setleq{\N}{len_s}$, and
    \item $\gcertify(len_s, len_t)$ is a path starting in $\gcommit(len_t)$ such that $\gcommit(len_s) \in \outsclosed{G}{\gcertify(len_s, len_t)}$.
  \end{itemize}
\end{definition}


Using a linking scheme graph as the underlying graph of a Merkle DAG yields a prefix authentication scheme (\cref{example_subgraph_scheme} gives an example):

\begin{definition}[Linking Scheme]
  Let $G = (V, E)$ be a linking scheme graph, and let $\h$ be a secure hash function.
  
  $G$ and $\h$ define a \defined{linking scheme} $(\commit, \dock, \certify)$ using the following functions:
  
  For strings $s$ of length $len_s$, let $(G, \lbl_{\lfun_s, \h})$ be the sequence graph of $s$ and $G$. We then define $\commit(s) \defeq \lbl_{\lfun_s, \h}(\gcommit(len_s))$. Observe that this can be computed from $s$ alone as $\setleq{\N}{len_s}$ \determines{} $\gcommit(len_s)$.
  
  For strings $s \preceq t$ of length $len_s$ and $len_t$ respectively, let $(G, \lbl_{\lfun_t, \h})$ be the sequence graph of $t$ and $G$. We then define $\certify(s, t)$ as the bitstring obtained by sorting $\outsclosed{G}{\gcertify(len_s, len_t)}$ according to $\leq$ and concatenating the labels (in $(G, \lbl_{\lfun_t, \h})$) of these vertices. Finally, we define $\verify(d_s, d_t, p, len_s, len_t)$ to first parse $p$ into a labeling $p'$ of $\outsclosed{G}{\gcertify(len_s, len_t)}$ and then return whether $(d_t, p')$ is a verified subgraph proof of $\gcommit(len_s)$ for $\gcommit(len_t)$.
\end{definition}

\begin{figure}
  \centering
  \includegraphics{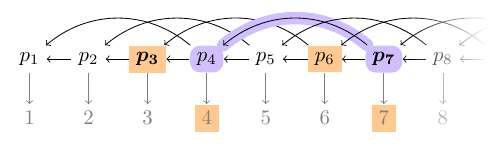}
  \caption{An example linking scheme, together with a certificate for $len_s \defeq 3$ and $len_t \defeq 7$. $p_3$ and $p_7$ are the digest vertices for $3$ and $7$ respectively. The \textcolor{cCertPathText}{path $(p_7, p_4)$} is a (shortest) path from $p_7$ whose out-neighborhood contains $p_3$, and its \textcolor{cCertText}{out-neighborhood} yields the prefix certificate.}
  \label{example_subgraph_scheme}
\end{figure}

\begin{proposition}
  \label{maintheoremLinking}
  Every linking scheme is a prefix authentication scheme.
\end{proposition}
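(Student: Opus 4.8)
The plan is to check the three defining clauses of a prefix authentication scheme one at a time for the triple $(\commit, \certify, \verify)$ built from the linking scheme graph $G$ and $\h$. The first two clauses are immediate from the construction: $\commit(s)$ is by definition the single label $\lbl_{\lfun_s,\h}(\gcommit(len_s)) \in \kleene{\set{0,1}}$, and $\certify(s,t)$ is the concatenation of the finitely many labels on $\outsclosed{G}{\gcertify(len_s,len_t)}$, hence also a bitstring. All the real content sits in the specification of $\verify$, which splits into \emph{completeness} (whenever witnessing sequences exist, $\verify$ returns $\true$) and \emph{soundness} (it is infeasible to make $\verify$ return $\true$ on inputs for which no witnessing prefix pair exists).

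For completeness, fix $s \preceq t$ of lengths $len_s \le len_t$, set $d_s \defeq \commit(s)$, $d_t \defeq \commit(t)$, $p \defeq \certify(s,t)$, and write $P \defeq \gcertify(len_s,len_t)$. The key step is a cross-graph consistency lemma: the label the sequence graph of $s$ assigns to $\gcommit(len_s)$ equals the label the sequence graph of $t$ assigns to the \emph{same} vertex. Indeed, $s \preceq t$ forces $\lfun_s$ and $\lfun_t$ to agree on the sinks $\setleq{\N}{len_s}$ (both return $\h$ of the identical item there), and because $\gcommit(len_s)$ is a \emph{tight} commitment to $\setleq{\N}{len_s}$, the set $\setleq{\N}{len_s}$ \determines{} $\gcommit(len_s)$; so that vertex's label is a function of those shared sink labels alone and is therefore the same in both graphs. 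Now working entirely inside the sequence graph of $t$, the parsed labeling $p'$ is exactly the true labeling restricted to $\outsclosed{G}{P}$, so reconstruction gives $\lbl_{\h}^{p'}(\gcommit(len_t)) = \lbl_{\lfun_t,\h}(\gcommit(len_t)) = d_t$, making $(d_t,p')$ a verified subgraph proof; and since $\gcommit(len_s) \in \outsclosed{G}{P}$ by the linking-scheme-graph axiom, the value $p'$ carries at $\gcommit(len_s)$ is its true $t$-label, which by the lemma equals $d_s$. Hence $\verify$ returns $\true$.

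For soundness I would reduce to \cref{subgraph_proof}. Suppose $\verify$ returns $\true$; then $(d_t,p')$ is a verified subgraph proof of $\gcommit(len_s)$ for $\gcommit(len_t)$. By \cref{subgraph_proof}, up to an infeasible hash collision the labeling $p'$ must coincide with the genuine Merkle labeling of a sequence graph on $\outsclosed{G}{P}$; reading the induced sink labels off $\outsclosed{G}{P}$ (and padding the remaining sinks with $\h(\epsilon)$) recovers a length-$len_t$ sequence $t$ with $\commit(t) = d_t$, whose length-$len_s$ prefix $s$ satisfies $\commit(s) = p'(\gcommit(len_s)) = d_s$ by the same \determines{} argument, while $p = \certify(s,t)$ holds by construction. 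Thus any accepting input that is \emph{not} of this honest shape would yield a labeling deviating from the true one, i.e.\ exactly the hash collision ruled out by \cref{subgraph_proof}.

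I expect the soundness step to be the main obstacle. \cref{subgraph_proof} is stated for a \emph{fixed} Merkle DAG (equivalently, a fixed sequence), whereas $\verify$ is handed raw bitstrings with no sequence in hand, so the delicate point is converting ``the labeling must be genuine'' into the honest-to-goodness \emph{existence} of witnessing sequences over the universe $U$. The reconstruction only guarantees that the \emph{internal} labels are outputs of $\h$; the sink labels supplied directly in $p'$ are raw, so one must still argue that each is realizable as $\h(u)$ for some item $u \in U$ before a genuine $t$ (rather than a merely consistent labeling) can be asserted to exist. I would isolate this realizability requirement explicitly — it is the one place where collision resistance alone does not suffice and where an assumption on the range of $\h$ over $U$ may be needed. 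The secondary subtlety, namely that $\commit$ and $\certify$ evaluate the single vertex $\gcommit(len_s)$ in two different sequence graphs, is exactly what the tight-commitment/\determines{} lemma of the completeness argument resolves, so I would prove that lemma first and reuse it on both sides.
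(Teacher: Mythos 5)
Your proof is correct, and its skeleton is the same as the paper's, which disposes of the proposition in three sentences: signatures are as required, completeness holds ``by construction'', and soundness is delegated to \cref{subgraph_proof} and \cref{cor_subgraph_proof}. Your completeness argument is the paper's implicit observation made explicit: the cross-graph consistency of the label of $\gcommit(len_s)$ is exactly what the paper notes when defining linking schemes ($\setleq{\N}{len_s}$ \determines{} $\gcommit(len_s)$, so that label depends only on the shared sink labels), and your reading that $\verify$ must compare $d_s$ against the entry of $p'$ at $\gcommit(len_s)$ --- a check the paper spells out only in the bolded TPAS variant --- is the intended one. Where you genuinely depart is soundness. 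The paper stops at \cref{cor_subgraph_proof}, which yields the structural fact $\gcommit(len_s) \in \reach_G(\gcommit(len_t))$, and asserts that any other accepting input ``witnesses a hash collision''; you instead attempt to produce the witnessing sequences that the PAS definition literally demands, and in doing so you surface a real gap: an accepting input whose supplied labels are not realizable by any sequence over the universe does \emph{not} witness a collision, so collision resistance alone does not close the existential clause, and some range/surjectivity-type assumption on $\h$ is needed. This caveat applies equally to the paper's own proof, which silently conflates ``the labeling is consistent with the graph structure'' with ``a sequence inducing that labeling exists''. One refinement to the issue you flagged: the raw, unrecomputed labels are those of the entire frontier $\outs{G}{\gcertify(len_s, len_t)}$, which contains non-sink vertices (for instance $\gcommit(len_s)$ itself, or roots of subtrees hanging off the certificate path), so realizability must be demanded for those labels as well, not only for the sink labels your write-up singles out; conversely, once a single witnessing $t$ realizing the whole frontier exists, your tight-commitment lemma hands you $s$ as its length-$len_s$ prefix with $\commit(s) = d_s$, so no separate realizability argument is needed on the $d_s$ side. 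In short: your argument buys a faithful treatment of the existential soundness clause at the cost of an extra hypothesis on $\h$, while the paper's buys brevity by leaving that clause unexamined.
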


\begin{proof}
  All functions have the required signatures. Let $s \preceq t$ be sequences of length $len_s$ and $len_t$ respectively, then $\verify(\commit(s), \commit(t), \certify(s, t), len_s, len_t)$ returns $\true$ by construction; and returning $\true$ for these inputs implies $s \preceq t$ by \cref{cor_subgraph_proof}. Any other inputs that yield $\true$ witness a hash collision by \cref{subgraph_proof}.
\end{proof}

This definition of linking schemes is adapted from Buldas et al.~\cite{buldas2000optimally} and captures the linear linking scheme~\cite{haber1990time}, anti-monotone schemes~\cite{buldas1998time}\cite{buldas1998new}, and the threaded authentication trees~\cite{buldas2000optimally}. This class of schemes does not however include Crosby and Wallach's secure logging scheme ~\cite{crosby2009efficient}, transparency logs~\cite{laurie2014certificate}, or hypercore~\cite{ogden2017dat}, prompting our search for a further generalization.

\subsection{Transitive Prefix Authentication Schemes}\label{tpass}

The generalization to include these other schemes is simple\footnote{Simple, that is, when starting from a characterization of linking schemes that was specifically crafted to allow for this generalization.}: rather than giving subgraph proofs that some digest vertex is reachable from another, we give subgraph proofs that some set that \determines{} a digest vertex is reachable from another digest vertex.

\begin{definition}[Transitive Prefix Authentication Graph]
  A graph $G = (V, E)$ is a \defined{transitive prefix authentication graph} (TPAG) if $G$ is acyclic, $\sinks(G) \supseteq \N$, and there exist functions $\fun{\gcommit}{\N}{V}$, $\fun{\dock}{\N}{\powerset{V}}$, and $\fun{\gcertify}{\N \times \N}{\powerset{V}}$ such that for all $len_s, len_t \in \N$ with $len_s < len_t$:

  \begin{itemize}
    \item $\gcommit(len_s)$ is a tight commitment to $\setleq{\N}{len_s}$,
    \item $\dock(len_s)$ \determines{} $\gcommit(len_s)$, and
    \item $\gcertify(len_s, len_t)$ is a union of paths, each starting in $\gcommit(len_t)$, such that $\dock(len_s) \subseteq \outsclosed{G}{\gcertify(len_s, len_t)}$.
  \end{itemize}
\end{definition}

Using a TPAG as the underlying graph of a Merkle DAG yields a prefix authentication scheme (\cref{example_tpas} gives an example). Because the definition is similar to that of linking schemes, we have typeset the differences in a \textbf{bold} font.

\begin{definition}[Transitive Prefix Authentication Scheme]
  Let $G = (V, E)$ be a \textbf{TPAG}, and let $\h$ be a secure hash function.
  
  $G$ and $\h$ define a \defined{transitive prefix authentication scheme} (TPAS) $(\commit, \dock, \certify)$ using the following functions:
  
  For strings $s$ of length $len_s$, let $(G, \lbl_{\lfun_s, \h})$ be the sequence graph of $s$ and $G$. We then define $\commit(s) \defeq \lbl_{\lfun_s, \h}(\gcommit(len_s))$. Observe that this can be computed from $s$ alone as $\setleq{\N}{len_s}$ \determines{} $\gcommit(len_s)$.
  
  For strings $s \preceq t$ of length $len_s$ and $len_t$ respectively, let $(G, \lbl_{\lfun_t, \h})$ be the sequence graph of $t$ and $G$. We then define $\certify(s, t)$ as the bitstring obtained by sorting $\outsclosed{G}{\gcertify(len_s, len_t)}$ according to $\leq$ and concatenating the labels (in $(G, \lbl_{\lfun_t, \h})$) of these vertices. Finally, we define $\verify(d_s, d_t, p, len_s, len_t)$ to first parse $p$ into a labeling $p'$ of $\outsclosed{G}{\gcertify(len_s, len_t)}$ and then return whether $(d_t, p')$ is a verified subgraph proof of $\gcommit(len_s)$ for $\gcommit(len_t)$ \textbf{and whether $\lbl_{\h}^{p'}(\gcommit(len_s)) = d_s$}.
\end{definition}

\begin{proposition}
  \label{maintheoremTransitive}
  Every \textbf{TPAS} is a prefix authentication scheme.
\end{proposition}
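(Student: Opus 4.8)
The plan is to transcribe the proof of \cref{maintheoremLinking}, keeping its two-part structure --- honest inputs force $\verify$ to accept (completeness), and any accepting inputs that are not honestly generated would exhibit a hash collision (soundness) --- and to concentrate the genuinely new work on the extra acceptance condition $\lbl_{\h}^{p'}(\gcommit(len_s)) = d_s$. The signature checks are identical to the linking case, and as before $\commit(s)$ is computable from $s$ alone because $\setleq{\N}{len_s}$ \determines{} $\gcommit(len_s)$.

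For completeness, I would fix $s \preceq t$ of lengths $len_s < len_t$, build the sequence graph of $t$, and let $p'$ be the labeling that $\certify$ produces, namely the genuine labels of $\outsclosed{G}{\gcertify(len_s, len_t)}$ in that graph. The subgraph-proof half of the acceptance test then reduces to the reconstruction $\lbl_{\h}^{p'}(\gcommit(len_t)) = \commit(t) = d_t$, which holds because $p'$ carries the genuine labels along the paths $\gcertify(len_s, len_t)$ issuing from $\gcommit(len_t)$. The crux is the second half. I would establish it in two steps: (i) since $\dock(len_s) \subseteq \outsclosed{G}{\gcertify(len_s, len_t)}$ and $\dock(len_s)$ \determines{} $\gcommit(len_s)$, the quantity $\lbl_{\h}^{p'}(\gcommit(len_s))$ is well defined and equals the genuine label $\lbl_{\lfun_t, \h}(\gcommit(len_s))$; and (ii) this label equals $\commit(s) = \lbl_{\lfun_s, \h}(\gcommit(len_s))$. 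Step (ii) is where $s \preceq t$ enters: as $\gcommit(len_s)$ is a tight commitment to $\setleq{\N}{len_s}$, its label is a function of only the sink labels on $\setleq{\N}{len_s}$, and on those sinks the sequence graphs of $s$ and of $t$ agree because $s$ and $t$ share their first $len_s$ items. Hence both conditions hold and $\verify$ returns $\true$.

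For soundness I would reuse \cref{cor_subgraph_proof} and \cref{subgraph_proof} essentially verbatim. On honest inputs, \cref{cor_subgraph_proof} shows $\dock(len_s) \subseteq \reach_G(\gcommit(len_t))$, and since $\dock(len_s)$ \determines{} $\gcommit(len_s)$ this is the structural witness that $s \preceq t$. Conversely, any inputs making $\verify$ accept without a genuine underlying $s \preceq t$ would supply a labeling that reconstructs $\gcommit(len_t)$'s label while differing from $\restrict{\lbl_{\lfun_t, \h}}{\domain(p')}$, contradicting \cref{subgraph_proof}; the new condition $\lbl_{\h}^{p'}(\gcommit(len_s)) = d_s$ is precisely what rejects a forged prefix digest $d_s$ by pinning it to the label that $\dock(len_s)$ determines.

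I expect the main obstacle to be making step (ii) of completeness rigorous, i.e. arguing that $\lbl_{\lfun_t, \h}(\gcommit(len_s))$ does not depend on sink labels outside $\setleq{\N}{len_s}$. This is exactly the force of ``tight commitment'': its determination half guarantees that every maximal path out of $\gcommit(len_s)$ meets $\setleq{\N}{len_s}$, so the recursive label computation terminates before reaching any sink beyond position $len_s$, while the commitment half guarantees that all of $\setleq{\N}{len_s}$ genuinely feeds into that label. Once this insensitivity is pinned down, the remainder is a direct adaptation of the linking-scheme argument.
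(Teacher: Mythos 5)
Your proposal is correct and follows essentially the same route as the paper: the paper's proof is just a terse version of your argument, dispatching completeness ``by construction'' (your steps (i) and (ii) spell out exactly why the extra check $\lbl_{\h}^{p'}(\gcommit(len_s)) = d_s$ succeeds on honest inputs, which the paper leaves implicit via the observation that $\setleq{\N}{len_s}$ \determines{} $\gcommit(len_s)$) and soundness via \cref{cor_subgraph_proof} and \cref{subgraph_proof}. The only cosmetic difference is that the paper attributes a forged $d_s$ passing the new check directly to \cref{uniquely_determined}, whereas you route the same collision argument through \cref{subgraph_proof}, which is itself an instance of \cref{uniquely_determined}.
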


\begin{proof}
  All functions have the required signatures. Let $s \preceq t$ be sequences of length $len_s$ and $len_t$ respectively, then $\verify(\commit(s), \commit(t), \certify(s, t), len_s, len_t)$ returns $\true$ by construction; and returning $\true$ for these inputs implies $s \preceq t$ by \cref{cor_subgraph_proof}. Any other inputs that yield $\true$ witness a hash collision by \cref{subgraph_proof} (when verifying the subgraph proof) \textbf{or by \cref{uniquely_determined}} (when checking that $\lbl_{\h}^{p'}(\gcommit(len_s)) = d_s$).
\end{proof}

\begin{figure}
  \centering
  \includegraphics{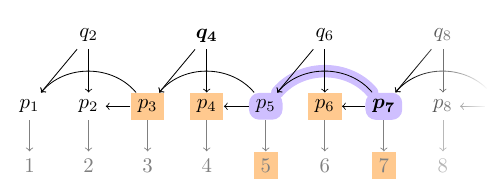}
  \caption{A TPAS, together with a certificate for $len_s \defeq 4$ and $len_t \defeq 7$. $q_4$ and $p_7$ are the digest vertices for $4$ and $7$ respectively. The \textcolor{cCertPathText}{path $(p_7, p_5)$} is a (family of exactly one) path from $p_7$ whose out-neighborhood contains $p_3$ and $p_4$, who together label-determine $q_4$. Its \textcolor{cCertText}{out-neighborhood} yields the prefix certificate.}
  \label{example_tpas}
\end{figure}

\subsection{Efficiency Criteria}\label{tpas_efficiency}

Because every TPAS stems from a TPAG, we can reason about prefix authentication schemes while remaining solely in the realm of unlabeled digraphs; the labelings are a deterministic afterthought. Given the length $k$ of each individual hash, we can also reason about the complexity criteria of \cref{prefix_authentication_schemes} by solely considering graph properties.

Since digests are labels of individual vertices, the digest size of any sequence is $k$.

Worst-case and average prefix certificate sizes correspond directly to the sizes of the out-neighborhood of $\gcertify(len_s, len_t)$: the prefix certificate is $\restrict{\lbl_{\lfun, \h}}{\outs{G}{\gcertify(len_s, len_t)}}$, which can be encoded by fixing an ordering on $\V(G)$ in advance and simply listing the labels of $\outs{G}{\gcertify(len_s, len_t)}$ according to that ordering. The space requirement is hence $\abs{\outs{G}{\gcertify(len_s, len_t)}} * k$.

Computing $\verify$ consists of reconstructing $\commit(len_s)$ from $\dock(len_s)$ and $\commit(len_t)$ from $\certify(len_s, len_t)$. Reconstructing $\commit(len_s)$ from $\dock(len_s)$ requires time proportional to the number of distinct edges on all paths from $\commit(len_s)$ to any vertices from $\dock(len_s)$. Reconstructing $\commit(len_t)$ from $\certify(len_s, len_t)$ requires time proportional to the number of edges in the graph induced by the closed neighborhood of $\gcertify(len_s, len_t)$.

The complexity of computing $\commit(len_s)$ and $\certify(len_s, len_t)$ is less straightforward. Because $\gcommit(len_s)$ is a common ancestor of all $i \leq len_s$, computing its label requires at least $\complexity{len_s}$ time. In a realistic setting, one would instead precompute and store the labels of all all digest vertices, turning the computation into a simple look-up. The additional storage cost of $\complexity{len_s}$ space does not exceed the cost for storing $len_s$ sequence items in the first place. But when also precomputing the labels for all vertices that can appear in any prefix certificate, the overall space complexity might exceed $\complexity{len_s}$.

We simplify the analysis by looking at the storage cost for precomputing the labels of \textit{all} vertices. Intuitively, we expect efficient schemes to not have redundant vertices, so this simplification should only be an overapproximation for schemes that are of little practical interest to begin with.

In order to classify the storage cost per sequence item, we define the graph $\tpaston{G}{len_s} \defeq \bigcup_{i \leq len_s} \reach(\gcommit(i))$ of all vertices that are required to work with a sequence of length $len_s$. The number of labels that need to be precomputed and stored because of the $len_s$-th sequence item is then given by $\abs{\V(\tpaston{G}{len_s}) \setminus \V(\tpaston{G}{len_s - 1})}$. We are both interested in the worst-case for any $len_s$ and in the amortized case of averaging over the first $len_s$ sequence items.

For iterative computation of digests, we require a function $\fun{\commitpool}{\N}{\V(G)}$ such that $\commitpool(len_s - 1) \cup \set{len_s}$ \determines{} both $\commit(len_s)$ and all vertices in $\commitpool(len_s)$. Intuitively, the digest pool for some $len_s$ consists of all the vertices in $\tpaston{G}{len_s}$ whose label impacts the label of any future vertex, i.e., every vertice in $\outsclosed{}{\tpaston{G}{len_t} - \tpaston{G}{len_s}} \cap \tpaston{G}{len_s}$ for any $len_t > len_s$.
The labels of $\commitpool(len_s)$ then allow for indefinitely appending new items to a sequence of length $len_s$ by adding any newly created vertices whose labels will be used in the future; we are interested in functions that minimize $\abs{\commitpool(len_s)}$. 

For computing prefix certificates from only parts of full sequences, we consider functions $\fun{\certificatepool}{\N}{\V(G)}$ such that $\outsclosed{G}{\certificatepool(len_s)} \cup \outsclosed{G}{\certificatepool(len_t)}$ \determines{} $\outsclosed{G}{\gcertify(len_s, len_t)}$. This allows for computing $\certify(len_s, len_t)$ from the (out-neighborhoods of the) \defined{certificate pools} of $len_s$ and $len_t$. We are then interested in functions that minimize $\abs{\outsclosed{G}{\certificatepool(len_s)}}$.

\subsection{Secure Timestamping}\label{secure_time_stamping}

Secure timestamping~\cite{haber1990time} asks to cryptographically certify the happened-before relation on some totally-ordered set of events. Prefix authentication does not immediately imply secure timestamping, but the problems are related: if event number $s$ happened before event number $t$, we can certify that the sequence of the first $s$ events is a prefix of the sequence of the first $t$ events. This only relates the \textit{digests} of the event sequences however, not the events \textit{themselves}.

We can extend every TPAS to provide time stamping. We define the \defined{identifier} of the $i-th$ item in some sequence as a (deterministically selected) subgraph proof of its vertex for $\gcommit(i)$. We call it an \textit{identifier} because it identifies a particular item as occuring at a particular position in a particular sequence (and its extensions).

Let $s$ be the sequence of the first $len_s$ events, and let $t$ be the sequence of the first $len_t$ events, with $s \preceq t$. To certify that event number $len_s$ happened before event number $len_t$, simply provide $\certify(s, t)$ together with the identifiers of $len_s$ and $len_t$. $\certify(s, t)$ certifies the happened-before relation of the digests, and the identifiers tie the digests to the actual items. Verification consists of verifying the certificate as well as the two identifiers.

Hence, the worst-case and average size of the identifier for any item at position $n$ becomes another complexity parameter of interest. Its exact value is $k$ (the size of an individual digest) times the number of vertices in the subgraph proof of $\set{i}$ for $\gcommit(i)$.

\section{Prior Schemes}\label{existing_schemes}

We now give definitions of timestamping and logging schemes from the literature, expressed as TPASs. This serves as a demonstration of the generality and applicability of TPASs, it provides a survey of existing approaches, and it allows us to apply our efficiency criteria to previous work.

Our definition of Merkle DAGs automatically incorporates the notion of \textit{computing} the labels along a path from a prefix certificate rather than directly \textit{using} those labels as the certificate, an optimization introduced by Buldas, Lipmaa, and Schoenmakers.~\cite{buldas2000optimally} after several prior schemes had already been published. The improvement that their section 5.2 gives over the antimonotone linking schemes~\cite{buldas1998time}\cite{buldas1998new} is inherent to our formulations of \textit{all} approaches.

Our presentation of timestamping schemes differs significantly from their original presentation in that we do not consider a setting of timestamping rounds. Working with timestamping rounds effectively amounts to solving prefix authentication for strings of bounded length. Once the maximal string length is reached, the round concludes and a new round begins for the next subsequence of bounded length.

For authentication across rounds, the rounds must themselves be maintained in a prefix-authenticating data structure. This requires an awkward nesting of prefix authentication schemes that is overall less efficient than authenticating the full string without subdividing it into rounds.

Prefix authentication for strings of bounded length is an easier problem than prefix authentication for strings of arbitrary length. Hence, we need to adapt the timestamping schemes to the more general setting, and this adaptation results in worse positional certificate sizes than the original publications report. The original publications do not account for the cost of inter-round authentication, which is why we do our own complexity analyses and arrive at worse bounds.

This adaptation also means that the proofs of optimality in the timestamping literature do not apply to our setting. While we believe our adaptations are faithful, more efficient solutions for round-less authentication can exist.

\subsection{Trivial Schemes}

The simplemost linking scheme is the \defined{linear scheme} of~\cite{haber1990time}. Its underlying graph is a ``Merkle linked-list'' $G_{lin}$:

\begin{align*}
  V_{lin} &\defeq \set{p_n \st n \in \N} \cup \N,\\
  E_{lin} &\defeq \set{(p_{n + 1}, p_n) \st n \in \N} \cup \set{(p_n, n) \st n \in \N},\\
  G_{lin} &\defeq (V_{lin}, E_{lin}).
\end{align*}

To use this graph as a linking scheme, define $\gcommit(n) \defeq p_n$, and define $\gcertify(len_s, len_t)$ as the shortest path from $\gcommit(len_t)$ to $\gcommit(len_s)$ (see \cref{fig_linear} for a depiction). We use the same definitions of $\gcommit$ and $\gcertify$ for all linking schemes in this section, unless specified otherwise.

\begin{figure}
  \centering
  \includegraphics{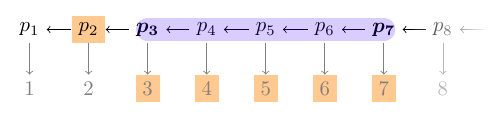}
  \caption{The linear linking scheme, highlighting \textcolor{cCertPathText}{$\gcertify(3, 7)$} and \textcolor{cCertText}{$\certify(3, 7)$}.}
  \label{fig_linear}
\end{figure}

Prefix certificates are of linear size in $len_t - len_s$. Certificate pools are of linear size in $len_s$, as they must contain the full path from $\gcommit(len_s)$ to $\gcommit(1)$: $\certificatepool(len_s) \defeq \tpaston{G_{lin}}{len_s}$. On the plus side, digest pools are of constant size, with $\commitpool(len_s) \defeq \gcommit(len_s)$.

The \defined{full linking scheme} is the other trivial scheme, with an edge from $p_j$ to $p_i$ for all $i < j$; the quadratic number of edges makes it unsuitable for any practical use.

\subsection{Skip List Schemes}

A simple but suboptimal way of interpolating between the two trivial schemes is to use a (contracted) deterministic skip list~\cite{pugh1990skip}. In addition to the edges of $G_{lin}$, also add an edge from $p_{n}$ to $p_{n - k}$ if $n$ is divisible by $k$. The certificate pool of $n$ is the out-neighborhood of the shortest path from $2^{\ceil{\log_2(n)}}$ to $n$ and from $n$ to $1$. The digest pool is the out-neighborhood of the shortest path from $n$ to $1$. \Cref{fig_skip} visualizes the construction.

\begin{figure}
  \centering
  \includegraphics{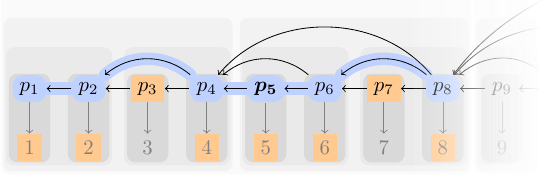}
  \caption{The CHAINIAC scheme, highlighting \textcolor{cPoolPathText1}{$\certificatepool(5)$} and its \textcolor{cCertText}{out-neighborhood}.}
  \label{fig_skip}
\end{figure}

This scheme is used by CHAINIAC~\cite{nikitin2017chainiac}, but it has prefix certificates of superlogarithmic size. Consider the vertex $n \defeq 2^k$. It has $k$ out-neighbors, all of which must occur in $\gcertify(1, 2^k)$. The second vertex of the shortest path from $2^k$ to $1$ is $2^{k - 1}$, whose out-neighborhood contains $k-1$ vertices. Iterating this argument yields a certificate size of $\Sigma_{i \leq k} i \in \complexity{k^2} = \complexity{\log(n)^2}$. While certificate pools have logarithmic size, their out-neighborhoods do not.

Blibech and Gabillon~\cite{blibech2006new} also propose a scheme based on skip lists, but their construction relies on the notion of timestamping rounds: unlike the timestamping schemes we generalize next, their scheme gives dedicated treatment to the last vertex of each timestamping round, and it can only provide prefix certificates for items whose timestamping rounds have been concluded. Hence, it is not a prefix authentication scheme according to our definition, as it is not applicable to authenticating sequences of unbounded length.

\subsection{Antimonotone Binary Schemes}

The \defined{simple antimonotone binary linking scheme}~\cite{buldas1998time} achieves logarithmic certificate pools by augmenting the linear scheme with only one additional outgoing edge per vertex. The additional edge for vertex $p_n$ goes to $p_{f_2(n)}$, with $f_2(n)$ defined as follows:

\begin{align*}
  f_2(n) &\defeq \begin{cases}
      n - (2^{k - 1} + 1) &\mbox{if } n = 2^k - 1, k \in \N \\
      n - 2^{g(n)} & \mbox{otherwise}\\
  \end{cases}\\
  g(n) &\defeq \begin{cases}
      k &\mbox{if } n = 2^{k} - 1, k \in \N \\
      g(n - (2^{k - 1} - 1)) & \mbox{if } 2^{k - 1} - 1 < n < 2^{k} - 1, k \in \N.
  \end{cases}\\
\end{align*}

Observe that for all $n < m$ we have that $f_2(n) \geq f_2(m)$, hence the \defined{antimonotone} in the name.

We denote the resulting graph as $G_{ls2} \defeq (V_{lin}, E_{lin} \cup \set{(p_n, p_{f_2(n)}) \st n \in \setgeq{N}{2}})$, \cref{fig_ls2} shows an excerpt. For any $n$, we say it belongs to \defined{generation} $\floor{\log_2(n)}$. We say $p_{2^{t + 1} - 1}$ is the \defined{vertebra} of generation $t$, and $\bigcup_{k \leq t} \set{p_{2^{k + 1} - 1}}$ is the \defined{spine} of generation $t$.

\begin{figure}
  \centering
  \includegraphics{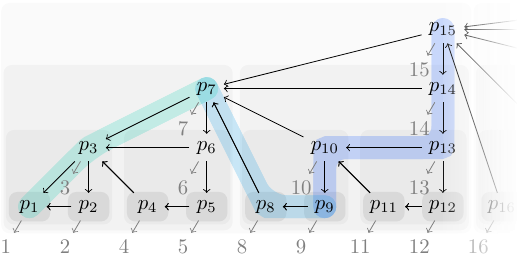}
  \caption{The simple antimonotone binary linking scheme, highlighting $\certificatepool(9)$, which consists of the paths \textcolor{cPoolPathText1}{from the next vertebra to $p_9$}, \textcolor{cPoolPathText2}{from $p_9$ to the previous vertebra}, and \textcolor{cPoolPathText3}{from the previous vertebra to $p_1$}.}
  \label{fig_ls2}
\end{figure}

Let $n$ a number of generation $t$. Then, the union of the shortest paths from the vertebra of $t$ to $p_n$, from $p_n$ to the vertebra of $t - 1$, and from that vertebra to $p_1$ (the latter two paths together form the shortest path from $p_n$ to $p_1$) is a certificate pool for $n$~\cite{buldas1998time}. We proceed with a proof sketch for the size of the corresponding positional certificates.

Observe the recursive structure of $G_{ls2}$: the graph of the first $t + 1$ generations consist of two copies of the graph of the first $t$ generations --- with the outgoing edges of the spine of the copy being replaced with edges to the (original) vertebra of generation $t$ --- and a new vertebra $p_{2^{(t+1)+1} - 1}$. This recursive structure enables convenient inductive reasoning based on the generation of a vertex.

Further observe that every vertex is either a vertebra vertex for some generation $t$, or a (transitive) copy of a vertebra vertex for some generation $t$. In both cases, we say that the vertex is of \defined{order} $t$.

For $n$ of generation $t$, the shortest path from the vertebra of $t$ to the vertebra of $t - 1$ via $p_n$ is of maximal size if $p_n$ is of order $0$. The path consists of vertices of decrementing order from the vertebra of $t$ to $p_n$, followed by vertices of incrementing order up to the vertebra of $t - 1$. Consequently, all order $0$ vertices of the same generation yield certficate pools of the same size: $2t - 1$. A proper proof would perform induction on $t$: removing the vertebrae of $t$ and $t - 1$ from the path yields a path isomorphic to that of $n - 2^{t} - 1$, which is of generation $t - 1$.

The out-neighborhood of this path has at least the same size, because every $p_n$ has an edge to $n$. More tricky are the edges to other $p_m$; we need to count the number of such edges that lead outside the path. The edges corresponding to $f_2$ never do so, this would immediately contradict antimonotonicity. It remains to consider the edges of the form $(p_n, p_{n - 1})$.

For vertices of generation at most $2$, all these edges are part of the path. For vertices of any other generation $t + 1$, there are $t - 2$ such edges leading outside the path, as can be seen inductively: if the successor of vertex $p_{2^{t} - 3}$ on the path is not its numeric predecessor, the number of predecessor edges outside the path increases by one compared to the path for the previous generation. If the successor of vertex $p_{2^{t} - 3}$ on the path \textit{is} its numeric predecessor, then $p_{f_2(2^{t} - 3)}$ is part of the path but its predecessor is not, again contributing exactly one additional edge to a path of the previous generation (see \cref{fig_ls2neighborhood}).

\begin{figure*}
  \centering
  \includegraphics{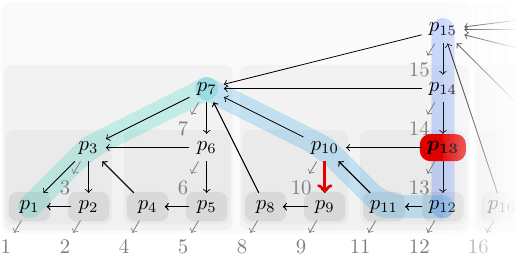}\includegraphics{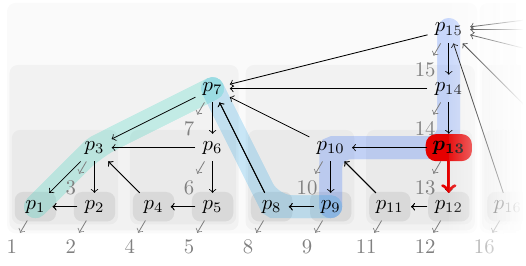}
  \caption{Visualizing the inductive step in counting the number of predecessor edges outside the certificate pool. Whether $n$ of generation $t$ is greater than or equal to $2^{t} + 2^{t - 1} - 1$ (i.e., the numeric predecessor of $p_{2^{t} - 3}$ is part of the certificate pool) or not, a single edge more leads outside the certificate pool than for a number of generation $t - 1$. The graphic shows the concrete case of $t \defeq 3$ with $n \defeq 12$ on the left and $n \defeq 9$ right. The crucial vertex is \textcolor{myRed}{$p_{13}$}.}
  \label{fig_ls2neighborhood}
\end{figure*}

For the full positional certificate of $n$ of generation $t$, it remains to add the size of the out-neighborhood of the shortest path from the vertebra of $t-1$ to $p_1$. This path has $t$ vertices, each contributing two vertices to the out-neighborhood ($p_{m - 1}$ and $m$), except for $p_1$, which has only a single out-neighbor. Adding everything up (and accounting for the double-counting of $p_{t - 1}$) yields the positional certificate size of $(5 \cdot \floor{\log_2(n)} - 3) \cdot k$, where $k$ is the size of an individual hash.

The shortest path from $p_n$ to $p_1$ is of logarithmic size and it can serve as a digest pool (in fact, antimonotonicity implies that this shortest path is a digest pool for \textit{every} antimonotone scheme). The observation that the graph has no edge which jumps over any vertebra can be used to further shrink the digest pool to the shortest path from $p_n$ to the vertebra of the previous generation.

The \defined{optimal antimonotone binary linking scheme}~\cite{buldas1998new} extends the linear scheme with a slightly different, antimonotone function $f_3(n)$, to obtain the graph $G_{ls3} \defeq (V_{lin}, E_{lin} \cup \set{(p_n, p_{(n)}) \st n \in \setgeq{N}{2}})$ (\cref{fig_ls3}):

\begin{align*}
  f_3(n) &\defeq \begin{cases}
    n - (3^{k - 1} + 1) &\mbox{if } n = \frac{3^k - 1}{2}, k \in \N \\
    n - (\frac{3^{h(n)} - 1}{2} + 1) & \mbox{otherwise}\\
  \end{cases}\\
  h(n) &\defeq \begin{cases}
    k &\mbox{if } n = \frac{3^k - 1}{2}, k \in \N \\
    h(n - \frac{3^{k - 1} - 1}{2}) & \mbox{if } \frac{3^{k - 1} - 1}{2} < n < \frac{3^k - 1}{2}, k \in \N
  \end{cases}\\
\end{align*}

\begin{figure*}
  \centering
  \includegraphics[width=18cm]{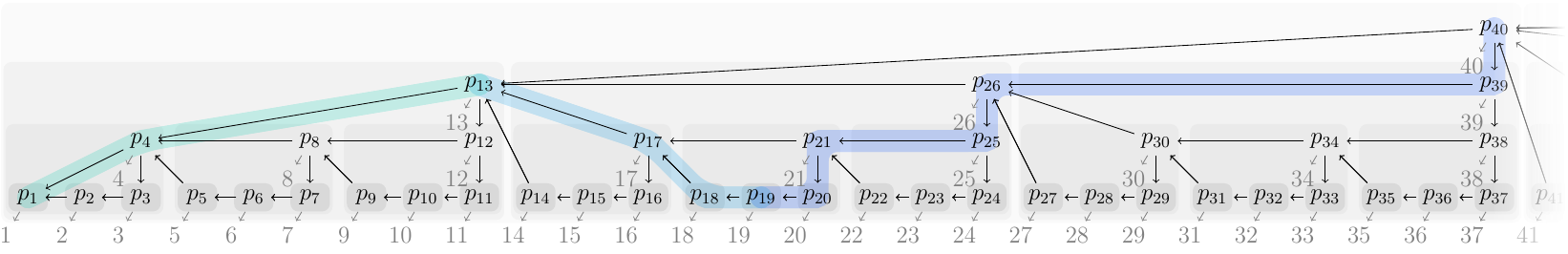}
  \caption{The optimal antimonotone binary linking scheme, highlighting $\certificatepool(19)$, which consists of the paths \textcolor{cPoolPathText1}{from the next vertebra to $p_19$}, \textcolor{cPoolPathText2}{from $p_19$ to the previous vertebra}, and \textcolor{cPoolPathText3}{from the previous vertebra to $p_1$}.}
  \label{fig_ls3}
\end{figure*}

Certificate pools (and their analysis) take the same shape as those of the simple antimonotone scheme, but with different generations and vertebra: the generation of $n$ in the optimal scheme is $\floor{\log_3(2n)}$, the vertebra of generation $t$ is $\frac{3^{t + 1} - 1}{2}$.

For $n$ of generation $t$, inductive arguments analogous to to those of the simple antimonotone scheme yield a maximal size of $3t + 1$ for the path from the vertebra of $t$ to the vertebra of $t - 1$ via $p_n$, and $2t - 3$ additional vertices for the out-neighborhood. The shortest path from the vertebra of $t-1$ to $p_1$ contributes another $2t - 1$ vertices to the positional certificate, yielding the total size of $(7 \cdot \floor{\log_3(2n)} - 4) \cdot k$ (again accounting for the double-counting of $p_{t - 1}$). This is more efficient than the simple antimonotone scheme; for all $n \geq 128$ we have $(7 \cdot \floor{\log_3(2n)} - 4) \cdot k < (5 \cdot \floor{\log_2(n)} - 3) \cdot k$.

The shortest path from $p_n$ to the vertebra of the previous generation can again serve as a digest pool. In the optimal antimonotone scheme, this path contains redundancies however. For example, consider $p_{24}$ in \cref{fig_ls3}: none of the labels of $p_{23}$, $p_{22}$, or $p_{17}$ are directly involved in the computation of the label of any greater vertex. For the digest pool of $n$, it suffices to take the $p_m$ with maximal $m$ of each order (with $m \leq n$ and for a maximal order of the generation of $n$).

We would like to point the interested reader to the elegant characterization of all antimonotone binary graphs~\cite{buldas1998new} that forms the basis of the optimality proof for the optimal antimonotone scheme amongst all antimonotone binary schemes in the setting of prefix authentication for strings of bounded length. All antimonotone binary graphs can be constructed from a graph product operation $\otimes$, starting from the trivial graph $G_1$. The antimotone product is an efficient way of thinking about the antimonotone schemes; the rather intimidating functions of natural numbers to describe the two schemes we presented turn into neat, immediately related one-liners: $G_{simple}^{i + 1} \defeq G_{simple}^i \otimes G_{simple}^i \otimes G_1$ and $G_{opt}^{i + 1} \defeq G_{opt}^i \otimes G_{opt}^i \otimes G_{opt}^i \otimes G_1$.

\subsection{Merkle Trees}

Whereas the schemes we considered so far are extensions of $G_{lin}$, the remaining schemes utilize Merkle trees. To unify their presentation, we first define the infinite Merkle tree $G_{tree}$ on which they build (\cref{fig_tree}) in isolation:

\begin{align*}
  V_{tree} &\defeq \bigl\{(n, k) \st \text{$n \in \N, k \in \Nz$ and $2^k \divides n$}\bigr\},\\
  E_{tree} &\defeq \Bigl\{\bigl((n_0, k + 1), (n_1, k)\bigr) \st \text{$n_0 = n_1$ or $n_0 = n_1 + 2^{k}$}\Bigr\}\\
  & \cup \Bigl\{\bigl((i, 0), i\bigr) \st i \in \N\Bigr\},\\
  G_{tree} &\defeq (V_{tree}, E_{tree}).
\end{align*}

\begin{figure}
  \centering
  \includegraphics{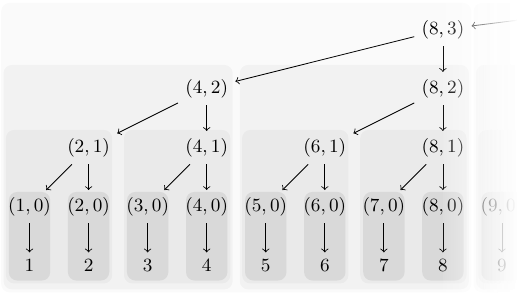}
  \caption{The start of the infinite Merkle tree $G_{tree}$.}
  \label{fig_tree}
\end{figure}

We can unify parts of the complexity analysis of the remaining schemes by analyzing $G_{tree}$. We first define the forest that corresponds to the first $n$ numbers: $\tpaston{G_{tree}}{n} \defeq G_{tree}[\setleq{V_{tree}}{n} \cup \set{(i, k) \st i \leq n}]$. $\tpaston{G_{tree}}{n}$ has at most $3n$ vertices for every $n$, but $\tpaston{G_{tree}}{n} - \tpaston{G_{tree}}{n - 1}$ can have up to $\ceil{\log_2(n)}$ vertices. Unlike the schemes we have seen so far, all schemes based on $G_{tree}$ thus require a non-constant amount of information for a single sequence item in the worst case.

We further define $\nextroot(n) \defeq (2^{\ceil{\log_2(n)}}, \ceil{\log_2(n)})$, the root of the smallest complete subtree to contain both $1$ and $n$, and $\nextpower(n) \defeq (2^{\ceil{\log_2(n)}}, 0)$.

The complexity analysis of several schemes depends on the number of roots in $\tpaston{G_{tree}}{n}$. Observe that the number of leaves of every tree in $\tpaston{G_{tree}}{n}$ is a power of two, and observe further that the trees in $\tpaston{G_{tree}}{n}$ all have different, strictly decreasing sizes. Every power of two less than $n$ occurs either once or not at all. In other words, the trees of $\tpaston{G_{tree}}{n}$ correspond directly to the binary representation of $n$.

\subsection{Threaded Authentication Trees}

We now turn to the first construction to use Merkle trees, the \defined{threaded authentication trees}~\cite{buldas2000optimally}. Threaded authentication trees start from $G_{tree}$ and then add edges from every $(n, 0)$ to the roots of the trees of $\tpaston{G_{tree}}{n}$, yielding a linking scheme with $\gcommit(n) \defeq (n, 0)$. The certificate pool of $n$ is the union of the shortest path from $\nextroot(n)$ to $(n, 0)$ and the shortest path from $\nextroot(n)$ to $(1, 0)$. See \cref{fig_tat} for an example.

\begin{figure}
  \centering
  \includegraphics{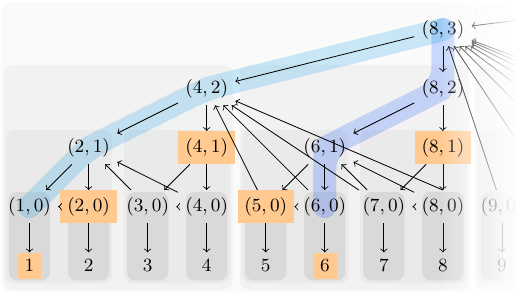}
  \caption{A threaded authentication tree, highlighting $\certificatepool(6)$, which consists of the paths \textcolor{cPoolPathText1}{from $\nextroot(6)$ to $(6, 0)$} and \textcolor{cPoolPathText2}{from $\nextroot(6)$ to $(1, 0)$}, and its \textcolor{cCertText}{out-neighborhood}.}
  \label{fig_tat}
\end{figure}

This definition of certificate pools yields positional certificates of size $2 \cdot \ceil{\log_2(n)} \cdot k$, where $k$ is the size of a single hash. This is almost twice as much as in the setting with rounds of known size, but it still outperforms the anti-monotone schemes and is optimal among all schemes we survey. But unlike the antimonotone schemes, the underlying graph is of super-linear size, $\tpaston{G_{tat}}{n}$ has $\complexity{n \log(n)}$ edges in the worst case:

Assume that $n = 2^k$. Remember that $(i, 0)$ has an outgoing edge for every $1$ in the binary representation of $i$. The numbers from $2^{k - 1}$ to $2^k$ (exclusively) are exactly the $\frac{n}{2}$ $k$-bit numbers. The total number of $1$ bits amongst them --- that is, the number of outgoing edges of the vertices $(\frac{n}{2}, 0), (\frac{n}{2} + 1, 0), \cdots, (n - 1, 0)$ --- is $\frac{n}{2} \cdot \frac{k}{2} \in \complexity{n \cdot k} = \complexity{n \log(n)}$.

As a consequence, checking a prefix certificate for two sequences of lengths $len_n < len_t$ can take time in $\complexity{\log(len_t) \cdot \log(\log(len_t))}$, despite the size of the certificate being in $\complexity{\log(len_t)}$. This is asymptotically slower than checking certificates for antimonotone schemes. The original presentation of threaded authentication trees focuses on certificate size only and does not mention this flaw.

As the outgoing edges of any vertex $(n, h)$ go to a root of a tree in $\tpaston{G_{tree}}{n}$, these roots give a digest pool for $n$. Since these roots correspond to binary digits, the size of the digest pool of $n$ is at most $\log_2(n)$.

\subsection{Hypercore}

Whereas threaded authentication trees turn $G_{tree}$ into a linking scheme, we now turn to approaches that turn $G_{tree}$ into more general TPASs. All these approaches share the insight that when $\tpaston{G_{tree}}{n}$ has a single root, that root can serve as a digest vertex for a sequence of $n$ items. But $G_{tree}$ has no possible digest vertices for sequences of any other length.

\defined{Hypercore}~\cite{ogden2017dat} takes a direct approach to augmenting $G_{tree}$ so that there is a digest vertex for every $n$. For every $n$ that is not a power of two, add a vertex $d_n$ with an outgoing edge to every root of $\tpaston{G_{tree}}{n}$ to obtain $G_{hyper}$ (\cref{fig_hypercore}).

\begin{figure}
  \centering
  \includegraphics{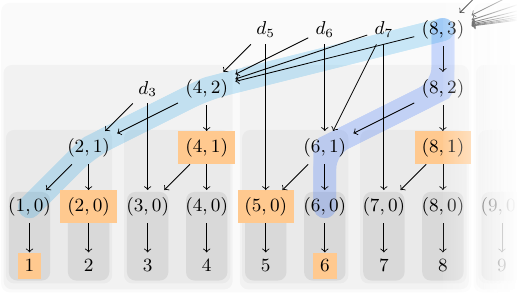}
  \caption{A hypercore, highlighting $\certificatepool(6)$, which consists of the paths \textcolor{cPoolPathText1}{from $\nextroot(6)$ to $(6, 0)$} and \textcolor{cPoolPathText2}{from $\nextroot(6)$ to $(1, 0)$}, and its \textcolor{cCertText}{out-neighborhood}.}
  \label{fig_hypercore}
\end{figure}

Then define $\gcommit(n)$ as $d_n$, or the root of $\tpaston{G_{tree}}{n}$ if $n$ is a power of two. $\dock(n)$ is then simply the set of roots of $\tpaston{G_{hyper}}{n}$, a set which \determines{} $\gcommit(n)$ by construction. Using these definitions of $\gcommit$ and $\dock$ leaves exactly one valid choice for defining $\gcertify(len_s, len_t)$: the unique family of paths that start in $\gcommit(len_t)$ and end ``just before'' $\dock(len_t)$.

The certificate pool of $n$ consists of the union of the path from $\nextroot(n)$ to $(n, 0)$ and the path from $\nextroot(n)$ to $(1, 0)$, just like with threaded authentication trees. Despite the different underlying graphs, the out-neighborhoods (and hence the sizes of positional certificates) are identical for threaded authentication trees and hypercores.

To see why the two paths yield valid certificate pools for hypercores, first observe that the out-neighborhood of the path from $\gcommit(n)$ to $(n, 0)$ is a subset of the out-neighborhood of the path from $\nextroot(n)$ to $(n, 0)$. 

Now consider any two numbers $len_s < len_t$. If $\nextroot(len_s) \neq \nextroot(len_t)$, then the path from $\nextroot(len_t)$ to $(1, 0)$ contains $\nextroot(len_s)$. And the out-neighborhood of the path from $\nextroot(len_s)$ to $(len_s, 0)$ contains $\dock(len_s)$, so together they contain $\gcertify(len_s, len_t)$.

In the other case of $\nextroot(len_s) = \nextroot(len_t)$, both $len_s$ and $len_t$ lie in the complete subtree that contains the leaves $2^{\ceil{\log_2(len_s)} - 1} + 1$ to $2^{\ceil{\log_2(len_s)}}$. This tree is isomorphic to the tree with the first $2^{\ceil{\log_2(len_s)} - 1}$ leaves, so $\certificatepool(len_s, len_t)$ is a valid certificate pool exactly if $\certificatepool(len_s - 2^{\ceil{\log_2(len_s)} - 1}, len_t - 2^{\ceil{\log_2(len_s)} - 1})$ is a valid certificate pool. Hence, validity follows by induction --- the base case is $len_s = 1$ and $len_t = 2$, for which the union of the certificate paths does contain $\gcertify(1, 2)$.

The roots of $\tpaston{G_{hyper}}{n}$ are a digest pool for $n$; The labels of the roots of $\tpaston{G_{hyper}}{n + 1}$ can be computed from the labels of the roots of $\tpaston{G_{hyper}}{n}$ together with sequence item $n + 1$.

Like threaded authentication trees, hypercores are of super-linear size: $\tpaston{G_{hyper}}{n}$ has $\complexity{n \cdot \log(n)}$ edges. But the number of edges within a prefix certificate is linear in the size of the certificate, so certificate validation lies in $\complexity{\log(n)}$.

Still, the super-linear size means that creating a sequence of length $n$ step-by-step takes $\complexity{n \log(n)}$ time, whereas the antimonotone schemes only need $\complexity{n}$ time. Furthermore, hypercores have identifiers of up to logarithmic size, unlike the constant-sized identifiers of antimonotone schemes or threaded authentication trees.

\subsection{Transparency Logs}

The \defined{transparency log} construction~\cite{laurie2014certificate} creates digest vertices for every $n$ by iteratively adding a parent vertex to the roots of the two smallest trees in $\tpaston{G_{tree}}{n}$ until there is a single root (\cref{fig_ct}). This root then serves as the digest of $n$.

\begin{figure}
  \centering
  \includegraphics{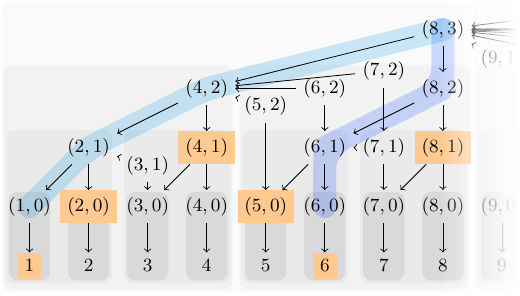}
  \caption{A certificate transparency log, highlighting $\certificatepool(6)$, which consists of the paths \textcolor{cPoolPathText1}{from $\nextroot(6)$ to $(6, 0)$} and \textcolor{cPoolPathText2}{from $\nextroot(6)$ to $(1, 0)$}, and its \textcolor{cCertText}{out-neighborhood}.}
  \label{fig_ct}
\end{figure}

Notice that contracting the newly created vertices for the same $n$ yields exactly the underlying graph of the hypercore scheme. The certificate transparency scheme is ultimately a deterministic (but essentially arbitrary) way of subdividing the digest vertices of hypercore until every vertex has at most two outgoing edges. Consequently, $\dock(n)$ is identical to that of hypercore, and $\gcertify(len_s, len_t)$ is again the unique family of paths that start in $\gcommit(len_t)$ and end ``just before'' $\dock(len_t)$.

Unlike hypercore, vertices in the certificate transparency scheme have a constant-bounded out-degree (two). The price is a super-linear number of vertices and the hashing of twice as many bits during certificate validation as hypercore. Otherwise, the same complexity analyses apply to both schemes.

In the context of certificate transparency, we want to point out that identifier sizes can matter outside timestamping. After submitting an entry to a certificate transparency log, the log operator replies\footnote{Setting aside CT-specific optimization details such as deliberate merge delays.} with the new \textit{signed tree head} (\textit{digest} in our terminology) and an \textit{inclusion proof} (an \textit{identifier} in our terminology). Using a scheme with constant-sized identifiers would be an asymptotic improvement over the certificate transparency scheme for this operation.

Five years prior to the publication of the certificate transparency scheme, Crosby and Wallach~\cite{crosby2009efficient} presented a highly similar scheme: whereas certificate transparency creates the smallest possible binary trees that contain all trees of $\tpaston{G_{tree}}{n}$, Crosby and Wallach's scheme also ensures that the vertices corresponding to sequence items all have the same height in the tree.

This makes the resulting graphs slightly larger supergraphs of the certificate transparency graphs, without any actual advantages. In hindsight, we can hence recommend to disregard Crosby and Wallach's scheme, while still appreciating that theirs is the first published non-linking scheme to solve prefix authentication.

\subsection{Summary}

Prior presentation of all the schemes we surveyed has focused exclusively on prefix or positional certificate sizes. From this limited perspective, threaded authentication trees, hypercores and certificate transparency logs are equivalent, and antimonotone schemes are inferior.

The picture drastically changes when taking into account the other complexity metrics of prefix authentication schemes. All three schemes with minimal certificates have a super-linear total size. Threaded authentication trees have smaller identifiers than hypercore and transparency logs, but suffer from super-logarithmic verification times. The scheme of Blibech and Gabillon~\cite{blibech2006new} does achieve the same certificate sizes with a linearly-sized underlying graph, but it does not generalize to the setting of unbounded sequence lengths.

\Cref{fig_summary} summarizes the results of our complexity analyses of the presented schemes.

\begin{table*}
  \centering
  \begin{tabular}{| c || c c c c |} 
   \hline
    & Linear & Full & Skiplist & Simple Antimonotone \\
   \hline\hline
   Positional Certificate & $n \cdot k$ & $n \cdot k$ & $\complexity{\log(n)^2} \cdot k$ & $(5 \cdot \floor{\log_2(n)} - 3) \cdot k$ \\ 
   Certificate Validation & $\complexity{certsize}$ & $\complexity{certsize^2}$ & $\complexity{certsize}$ & $\complexity{certsize}$ \\
   Edges Amortized & $\complexity{n}$ & $\complexity{n^2}$ & $\complexity{n}$ & $\complexity{n}$ \\
   Edges Worst Case & $\complexity{1}$ & $\complexity{n}$ & $\complexity{\log(n)}$ & $\complexity{1}$ \\
   Vertices Amortized & $\complexity{n}$ & $\complexity{n}$ & $\complexity{n}$ & $\complexity{n}$ \\
   Vertices Worst Case & $\complexity{1}$ & $\complexity{1}$ & $\complexity{1}$ & $\complexity{1}$ \\
   Identifier Amortized & $\complexity{1}$ & $\complexity{1}$ & $\complexity{1}$ & $\complexity{1}$ \\
   Identifier Worst Case & $\complexity{1}$ & $\complexity{1}$ & $\complexity{1}$ & $\complexity{1}$ \\
   Digest Pool & $1$ & $n$ & $\floor{\log_2(n)}$ & $\floor{\log_2(n)}$ \\ 
   \hline
    \hline
     & Optimal Antimonotone & Threaded Authentication & Hypercore & Transparency Log \\
    \hline\hline
    Position Certificate & $(7 \cdot \floor{\log_3(2n)} - 4) \cdot k$ & $2 \cdot \ceil{\log_2(n)} \cdot k$ & $2 \cdot \ceil{\log_2(n)} \cdot k$ & $2 \cdot \ceil{\log_2(n)} \cdot k$ \\ 
    Certificate Validation & $\complexity{certsize}$ & $\complexity{certsize \cdot \log(certsize)}$ & $\complexity{certsize}$ & $\complexity{certsize}$ \\
    Edges Amortized & $\complexity{n}$ & $\complexity{n \cdot \log(n)}$ & $\complexity{n \cdot \log(n)}$ & $\complexity{n \cdot \log(n)}$ \\
    Edges Worst Case & $\complexity{1}$ & $\complexity{\log(n)}$ & $\complexity{\log(n)}$ & $\complexity{\log(n)}$ \\
    Vertices Amortized & $\complexity{n}$ & $\complexity{n}$ & $\complexity{n}$ & $\complexity{n \cdot \log(n)}$ \\
    Vertices Worst Case & $\complexity{1}$ & $\complexity{\log(n)}$ & $\complexity{\log(n)}$ & $\complexity{\log(n)}$ \\
    Identifier Amortized & $\complexity{1}$ & $\complexity{1}$ & $\complexity{1}$ & $\complexity{1}$ \\
    Identifier Worst Case & $\complexity{1}$ & $\complexity{1}$ & $\complexity{\log(n)}$ & $\complexity{\log(n)}$ \\
    Digest Pool & $\floor{\log_3(2n)}$ & $\floor{\log_2(n)}$ & $\floor{\log_2(n)}$ & $\floor{\log_2(n)}$ \\ 
    \hline
   \end{tabular}
  \caption{Summary of the complexity analyses of all presented schemes. The number of edges and vertices determines how long it takes to create a sequence of $n$ items. Amortized complexities report the metric for $\tpaston{G}{n}$, worst-case complexities report the metric for $\tpaston{G}{n} \setminus \tpaston{G}{n - 1}.$}
  \label{fig_summary}
\end{table*}

\section{Conclusion}\label{conclusion}

Generalizing from secure time stamping and logging to prefix authentication has allowed us to transfer knowledge between results that have not been connected so far. The class of \textit{transitive prefix authentication schemes} serves as a tool to compactly and efficiently present and analyze prior results. The nuanced analysis shows that no existing approach is strictly superior to any other. We hope that future system designs will take into account all complexity criteria of prefix authentication schemes rather than latching on the first scheme with sub-linear prefix certificates that they lay eyes upon.

The main questions we leave open are questions of optimality. While threaded authentication trees have provably minimal positional certificates amongst linking schemes in a timestamping setting with rounds of known length, we did not transfer the optimality result to the setting of prefix authentication \textit{without} rounds. Similarly, we do not know whether the optimal antimonotone scheme for rounds of bounded length remains optimal amongst antimonotone schemes for prefix authentication without rounds.

Another open question is whether the optimal positional certificate sizes amongst \textit{linking} schemes are optimal amongst all \textit{transitive prefix authentication} schemes.

Our complexity analyses further surface a natural design challenge: that of finding a transitive prefix authentication scheme that achieves positional certificates of size $2 \cdot \ceil{\log_2(n)} \cdot k$ while having an underlying graph of linear size. Such a scheme would strictly outperform threaded authentication trees, hypercore, and certificate transparency logs.

Finally, we would like to emphasize again some of the limitations of prefix authentication schemes. First, the byzanthine fault-tolerant distributed ``append-\textit{only} log'' does not exist, as reacting to forks adds more expressivity to the abstract data type than just an append-operation. And second, ``proactive'' fork detection by placing logs onto a blockchain only shifts the problem to that of detecting forks in the blockchain, which might remain undetected for just as long as forks in a log when dealing with an equally powerful adversary (as it is literally the exact same problem).

The universe is cold and dark when it comes to distributed systems, but pretending otherwise always does more harm than good.

\bibliographystyle{plain}
\bibliography{\jobname}

\end{document}